\newcommand{\fullversion}[1]{\ifthenelse{\boolean{full}}{#1}{}}
\newcommand{\shortversion}[1]{\ifthenelse{\boolean{full}}{}{#1}}
\newtheorem{definition}{Definition}
\newtheorem{theorem}[definition]{Theorem}
\newtheorem{corollary}[definition]{Corollary}
\newcommand{\OCal}{\ensuremath{\mathcal{O}}}
\newcommand{\ignore}[1]{}
\title{CADbots: Algorithmic Aspects of Manipulating Programmable Matter with Finite Automata}
	\author[1]{Sándor P. Fekete}
	\author[2]{Robert Gmyr}
	\author[1]{Sabrina Hugo}
	\author[1]{Phillip Keldenich}
	\author[1]{Christian Scheffer}
	\author[1]{Arne Schmidt}
\date{}
		\affil[1]{Department of Computer Science, TU Braunschweig, Germany.\newline
			\tt{\{fekete, hugo, keldenich, scheffer, aschmidt\}@ibr.cs.tu-bs.de}}
		\affil[2]{Department of Computer Science, University of Houston, USA
			\tt{rgmyr@uh.edu}}
\begin{document}
	\maketitle

\begin{abstract}
We contribute results for a set of fundamental problems in the context of programmable matter 
by presenting algorithmic methods for evaluating and manipulating
a collective of particles by a finite automaton
that can neither store significant amounts of data,
nor perform complex computations, 
and is limited to a handful of possible physical operations.
We provide a toolbox for carrying out fundamental tasks
on a given arrangement of tiles, using the arrangement itself as a
storage device, similar to a higher-dimensional Turing machine with geometric
properties. Specific results include time- and space-efficient procedures for 
bounding, counting, copying, reflecting, rotating or scaling a complex given shape.
	\end{abstract}


\section{Introduction}
When dealing with classic challenges of robotics, such as exploration,
evaluation and manipulation of objects, traditional robot models are based
on relatively powerful capabilities, such as the ability (1) to collect
and store significant amounts of data, (2) perform intricate computations,
and (3) execute complex physical operations. With the ongoing progress in
miniaturization of devices, new possibilities emerge for exploration,
evaluation and manipulation. However,
dealing with micro- and even nano-scale dimensions (as present in the context of
programmable matter) introduces a vast spectrum of
new difficulties and constraints. These include significant limitations to
all three of the mentioned capabilities; challenges get even more pronounced
in the context of complex nanoscale systems, where there is a significant threshold
between ``internal'' objects and sensors and ``external'' control entities
that can evaluate gathered data, extract important information and provide guidance.

In this paper, we present algorithmic methods for evaluating and manipulating
a collective of particles by agents of the most basic possible type: finite automata
that can neither store significant amounts of data, nor perform complex computations, and are limited to a handful of possible physical operations.
The objective is to provide a toolbox for carrying out fundamental tasks
on a given arrangement of particles, such as bounding, counting, copying,
reflecting, rotating or scaling a large given shape. The key idea is to use
the arrangement itself as a storage device, similar to a higher-dimensional
Turing machine with geometric properties.

\subsection{Our Results}
We consider an arrangement $P$ of $N$ pixel-shaped particles, on which a single finite-state robot
can perform a limited set of operations; see Section~\ref{sec:prelim} for a precise model description.
Our goal is to develop strategies for evaluating and modifying the arrangement by defining sequences of
transformations and providing metrics for such sequences. In particular, we present the following;
a full technical overview is given in Table~\ref{tab:results}.

\begin{itemize}
\item We give a time- and space-efficient method for determining the bounding box of $P$.
\item We show that we can simulate a Turing machine with our model.
\item We provide a counter for evaluating the number $N$ of tiles forming $P$, as well as the number
of corner pixels of $P$.
\item We develop time- and space-efficient algorithms for higher-level operations, such as copying,
reflecting, rotating or scaling $P$.
\end{itemize}

\begin{table}[!ht]
	\resizebox{1\columnwidth}{!}{\renewcommand*{\arraystretch}{1.1}
	\begin{tabular}{|p{2.65cm}||c|c|c|}
		\hline
		\textbf{Problem} & \textbf{Tile Complexity} & \textbf{Time Complexity} & \textbf{Space Complexity}\\
		\hline\hline
		Bounding Box & $\OCal(|\partial P|)$ & $\OCal(wh\max(w,h))$ & $\OCal(w+h)$ \\
		\hline\hline
	    Counting: &&&\\
		\hfill $N$ Tiles &$\OCal(\log N)^{*}$&$\OCal\left(\max(w,h)\log N + N\min(w,h)\right)$&$\OCal(\max(w,h))$\\
		\hfill $k$ Corners &$\OCal(\log k)^{*}$&$\OCal\left(\max(w,h)\log k + k\min(w,h)+wh\right)$&$\OCal(\max(w,h))$\\
		\hline\hline
		Function:&&&\\
		\hfill Copy & $\OCal(N)^{*}$ & $\OCal(wh^2)$ & $\OCal(wh)$ \\
		\hfill Reflect & $\OCal(\max(w,h))^{*}$ & $\OCal((w+h)wh)$ & $\OCal(w+h)$ \\
		\hfill Rotate &  $\OCal(w+h)^{*}$ & $\OCal((w+h)wh)$ & $\OCal(|w-h|\max(w,h))$ \\
		\hfill Scaling by $c$ &  $\OCal(cN)$ & $\OCal((w^2+h^2)c^2N)$ & $\OCal(cwh)$
		\\\hline
	\end{tabular}
 	}\vspace{0.2cm}
 	\caption{Results of this paper. $N$ is the number of tiles in the given shape $P$, $w$ and $h$ its  width and height. ($^{*}$) is the number
of auxiliary tiles after constructing the bounding box.}
 	\label{tab:results}
\vspace*{-9mm}
\end{table}

\subsection{Related Work}

There is a wide range of related work; due to space limitations, we can only present a
small selection. 

A very successful model considers self-assembling DNA tiles
(e.g., \cite{TileAssemblySurvey,StagedSelfAssembly})
that form complex shapes based on the interaction of different glues along their edges; however, no active
agents are involved, and composition is the result of chemical and physical diffusion.

The setting of a finite-automaton robot operating on a set of tiles in a grid was introduced in~\cite{GmyrDNA18}, where
the objective is to arrange a given set of tiles into an equilateral triangle.
An extension studies the problem of recognizing certain shapes~\cite{GmyrMFCS18}.
We use a simplified variant of the model underlying this line of research that exhibits three main differences:
First, for ease of presentation we consider a square grid instead of a triangular grid.
Second, our model is less restrictive in that the robot can create and destroy tiles at will instead of only being able to transport tiles from one position to another.
Finally, we allow the robot to move freely throughout the grid instead of restricting it to move along the tile structure.

Other previous related work includes shape formation in a number of different models:
 in the context of agents walking DNA-based shapes~\cite{ReifS09, Anupama17,Wickham12};
in the Amoebot model~\cite{AmoebotShapeFormation}; in modular
robotics~\cite{ModularRobotics}; in a variant of population
protocols~\cite{PopulationProtocolShapeFormation}; in the nubot model~\cite{NubotShapeFormation}.
Models based on automate and movable objects have also been studied in the
context of one-dimensional arrays, e.g., pebble automata~\cite{PebbleAutomata}.
Work focusing on a setting of robots on graphs includes network exploration
\cite{NetworkExplorationProblem}, maze exploration~\cite{MazeExploration},
rendezvous search~\cite{RobotsOnGraphsRendezvous}, intruder caption and graph
searching \cite{CopsAndRobbers,GraphSearching}, and black hole search~\cite{BlackHoleSearch}.
For a connection to other approaches to agents
moving tiles, e.g., see~\cite{PushingBlocks,Assembly}.

Although the complexity of our model is very restricted, actually realizing
such a system, for example using complex DNA nanomachines, is currently still a
challenging task. However,
on the practical side, recent years have seen significant progress towards realizing
systems with the capabilities of our model.
For example, it has been shown that nanomachines have the ability
to act like the head of a finite automaton on an input tape~\cite{ReifS09}, to
walk on a one- or two-dimensional
surface~\cite{DNALandscape,Omabegho09,Wickham12}, and to transport
cargo~\cite{MolecularTransport,Anupama17,WangEW12}.

	\section{Preliminaries}\label{sec:prelim}
We consider a single \emph{robot} that acts as a \emph{deterministic finite automaton}. 
The robot moves on the (infinite) \emph{grid} $G=(\mathbb{Z}^2, E)$ with edges between all pairs of nodes that are within unit distance. 
The nodes of $G$ are called \emph{pixels}.
Each pixel is in one of two \emph{states}: It is either \emph{empty} or \emph{occupied} by a particle called \emph{tile}.
A \emph{diagonal pair} is a pair of two pixels $(x_1,y_1),(x_2,y_2)$ with $|x_1-x_2|=|y_1-y_2| = 1$ (see Fig.~\ref{fig:prelim}, left and middle).
A \emph{polyomino} is a connected set of tiles $P \subset \mathbb{Z}^2$ such that for all diagonal pairs $p_1,p_2 \in P$ there is another tile $p \in P$ that is adjacent to $p_1$ and adjacent to $p_2$ (see Fig.~\ref{fig:prelim} middle and right). 
We say that $P$ is \emph{simple} if it has no holes, i.e., if there is no empty pixel $p \in G$ that lies inside a cycle formed by occupied tiles.
Otherwise, $P$ is \emph{non-simple}.

The \emph{$a$-row} of $P$ is the set of all pixels $(x,a) \in P$. 
We say that $P$ is \emph{$y$-monotone} if the $a$-row of $P$ is connected in $G$ for each $a \in \mathbb{Z}$ (see Fig.~\ref{fig:prelim} right). 
Analogously, the \emph{$a$-column} of $P$ is the set of all pixels $(a,y) \in P$ and $P$ is called \emph{$x$-monotone} if the $a$-column of $P$ is connected in $G$ for each $a \in \mathbb{Z}$.
The \emph{boundary} $\partial P$ of $P$ is the set of all tiles of $P$ that are adjacent to an empty pixel or build a diagonal pair with an empty pixel (see Fig.~\ref{fig:prelim} right).

\begin{figure}[t]
	\centering\hfill
	\includegraphics[width=0.17\columnwidth]{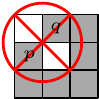}\hfill
	\includegraphics[width=0.17\columnwidth]{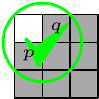}\hfill
	\includegraphics[width=0.4\columnwidth]{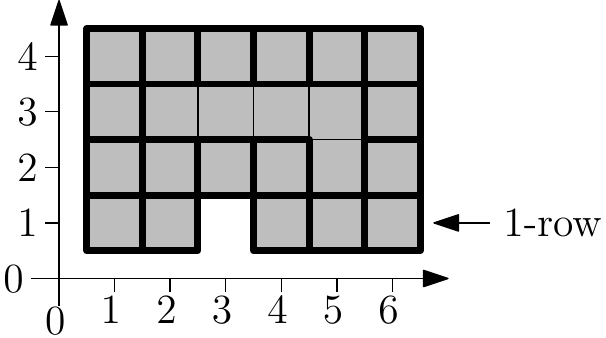}\hfill\phantom{}
	\caption{
		Left: An illegal diagonal pair $(p,q)$.
		Middle: An allowed diagonal pair $(p,q)$.
		Right: A polyomino $P$ with its boundary $\partial P$ (tiles with bold outline).
		$P$ is $x$-monotone but not $y$-monotone, because the $1$-row is not connected.%
	}
	\label{fig:prelim}
\end{figure}

A \emph{configuration} consists of the states of all pixels and the robot's location and state.
The robot can transform a configuration into another configuration using a sequence of look-compute-move steps as follows.
In each step, the robot acts according to a \emph{transition function} $\delta$.
This transition function maps a pair $(p,b)$ containing the state of the robot and the state of the current pixel to a triple $(q,c,d)$, where $q$ is the new state of the robot, $c$ is the new state of the current pixel, and $d \in \{\textrm{up},\textrm{down},\textrm{left},\textrm{right}\}$ is the direction the robot moves in.
In other words, in each step, the robot checks its state $p$ and the state of the current pixel $b$, computes $(q,c,d) = \delta(p,b)$, changes into state $q$, changes the state of the current pixel to $c$ if $c \neq b$, and moves one step into direction $d$.

Our goal is to develop robots transforming an unknown initial configuration $S$ into a target configuration $T(S)$.
We assess the efficiency of a robot by several metrics: 
(1) \emph{Time complexity}, i.e., the total number of steps performed by the robot until termination,
(2) \emph{space complexity}, i.e., the total number of visited pixels outside the bounding box of the input polyomino $P$, and
(3) \emph{tile complexity}, i.e., the maximum number of tiles on the grid minus the number of tiles in the input polyomino $P$.

	\section{Basic Tools}
A robot can check the states of all pixels within a constant distance by performing a tour of constant length. Thus, from now on we assume that a robot can check within a single step all eight pixels that are adjacent to the current position $r$ or build a diagonal pair with $r$.

\subsection{Bounding Box Construction}
\label{sec:bbox}
We describe how to construct a bounding box of a given polyomino $P$ in the form of a zig-zag as shown in Fig.~\ref{fig:bb_example}. 
We can split the bounding box into an \textit{outer lane} and an \textit{inner lane} (see Fig.~\ref{fig:bb_example}).
This allows distinguishing tiles of $P$ and tiles of the bounding box.

Our construction proceeds in three phases.
(i) We search for an appropriate starting position.
(ii) We wrap a zig-zag path around $P$.
(iii) We finalize this path to obtain the bounding box.

\ignore{
The main idea is to wrap a zig-zag line around the polyomino and push the line
outwards if there occurs a conflict with the polyomino, i.e., tiles of the line
share a corner or a side with a tile of $P$ (in case the second tile belongs to
the line, we have completely enclosed $P$).  A further problem while
constructing this line is to determine at which point we have to make a turn.
We observe that after a turn, the next turn cannot be made before finding a tile that has distance two to the actual position (or else we can create a cycle not containing $P$). 
If such a tile has been found, we do a turn when there is no such tile anymore.
}
\begin{figure}
	\centering
  	\includegraphics[width=0.5\columnwidth]{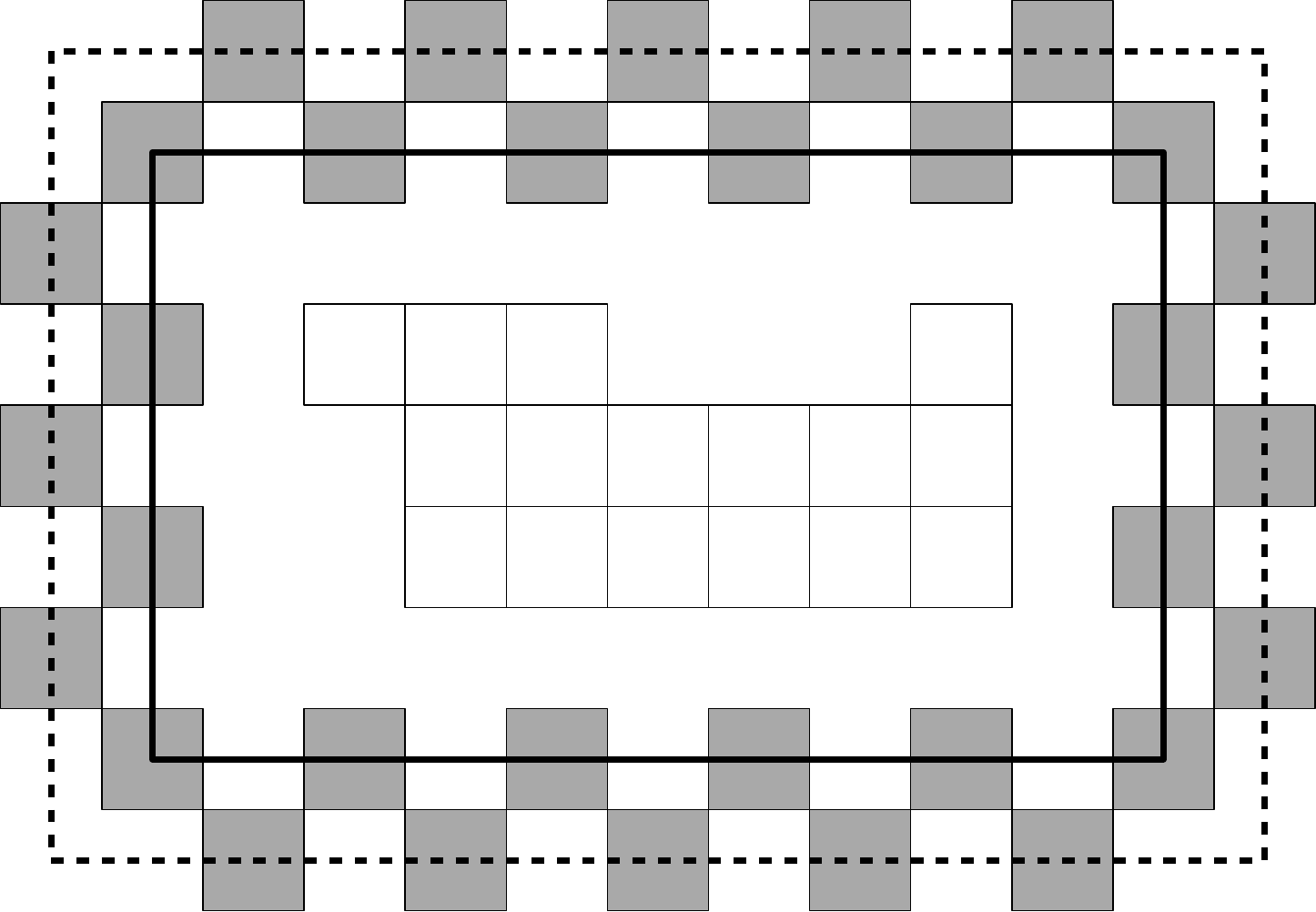}
	\caption{A Polyomino $P$ (white) surrounded by a bounding box of tiles (gray) on the inner lane (solid line) and tiles on the outer lane (dashed line).}
	\label{fig:bb_example}
	\vspace{-0.3cm}
\end{figure}
 
 \ignore{
We construct the bounding box as follows:
First, we find a starting position. This can be done by searching for a local minimum in the y-direction. In case of ties, we take the one with minimal x-coordinate, i.e., the leftmost tile.
}
For \textbf{phase (i)}, we simply search for a local minimum in the $y$-direction. 
We break ties by taking the tile with minimal $x$-coordinate, i.e., the leftmost tile.
From the local minimum, we go two steps further to the left.
If we land on a tile, this tile belongs to $P$ and we restart phase (i) from this tile.
Otherwise we have found a possible \emph{start position}.

In \textbf{phase (ii)}, we start constructing a path that will wrap around $P$.
We start placing tiles in a zig-zag manner in the upwards direction.
While constructing the path, three cases may occur.
\begin{enumerate}
	\item At some point we lose contact with $P$, i.e., there is no tile at distance two from the inner lane.
	In this case, we do a right turn and continue our construction in the new direction.

	\item Placing a new tile produces a conflict, i.e., the tile shares a corner or a side with a tile of $P$.
	In this case, we shift the currently constructed side of the bounding box outwards until no more conflict occurs.
	This shifting process may lead to further conflicts, i.e., we may not be able to further shift the current side outwards.
	If this happens, we deconstruct the path until we can shift it outwards again (see Fig.~\ref{fig:bb_conflicts}).
	In this process, we may be forced to deconstruct the entire bounding box we have built so far, i.e., we remove all tiles of the bounding box including the tile in the starting position.
	In this case we know that there must be a tile of $P$ to the left of the start position; we move to the left until we reach such a tile and restart phase (i).
	
	\item Placing a new tile closes the path, i.e., we reach a tile of the bounding box we have created so far.
	We proceed with phase (iii).
\end{enumerate}

\ignore{
During the construction we may get conflicts, i.e., the bounding box can hit $P$.
In this case, we can shift the current bounding box line in the direction $d$, which is the direction after a left turn.
While shifting the line, we may encounter further conflicts.
If this is the case, we have to remove tiles of the bounding box until no more conflict occurs and only then proceed with the shifting.
(Note that in some cases we completely have to remove the construction built so far. We know that to left of our starting position has to be a tile of $P$. Therefore, we can go left until we reach this tile and restart the whole process.)
}

\textbf{Phase (iii):} Let $t$ be the tile that we reached at the end of phase (ii).
We can distinguish two cases: (i) At $t$, the bounding box splits into three different directions
(shown as the tile with black-and-white diagonal stripes in Fig.~\ref{fig:bb_cases}),
and (ii) the bounding box splits into two different directions. 
In the latter case we are done, because we can only reach the bottom or left side of the bounding box.
In the first case we have reached a right or top side of the bounding box.
From $t$ we can move to the tile where we started the bounding box construction,
and remove the bounding box parts until we reach $t$.
Now the bounding box splits in two directions at $t$.
Because we have reached a left or top side of the bounding box, we may not have built a convex shape around $P$ (see Fig.~\ref{fig:bb_cases} left).
This can be dealt with by straightening the bounding box in an analogous fashion, e.g., by pushing the line to the right of $t$ down. 

\begin{figure}
	\vspace{-0.1cm}
	\centering	\hfill
	\includegraphics[width=0.22\columnwidth]{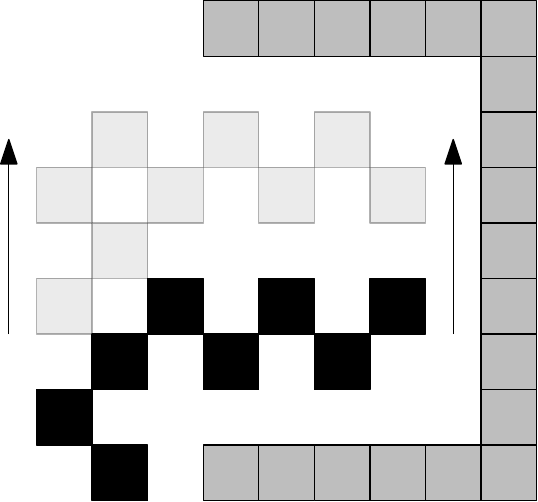}\hfill
	\includegraphics[width=0.22\columnwidth]{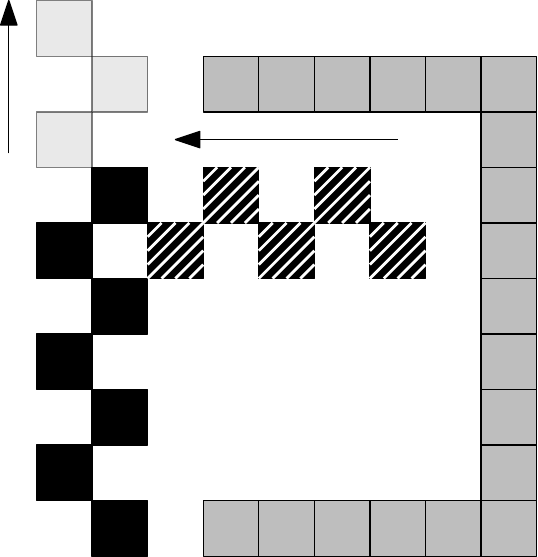}\hfill\phantom{}
	\caption{Left: Further construction is not possible. Therefore, we shift the line upwards (see light gray tiles).
		Right: A further shift produces a conflict with the polyomino. Thus, we remove tiles (diagonal stripes) and proceed with the shift (light gray) when there is no more conflict.}
	\label{fig:bb_conflicts}
	\vspace{-0.05cm}
\end{figure}
\begin{figure}
	\centering
	\hfill
	\includegraphics[width=0.43\columnwidth]{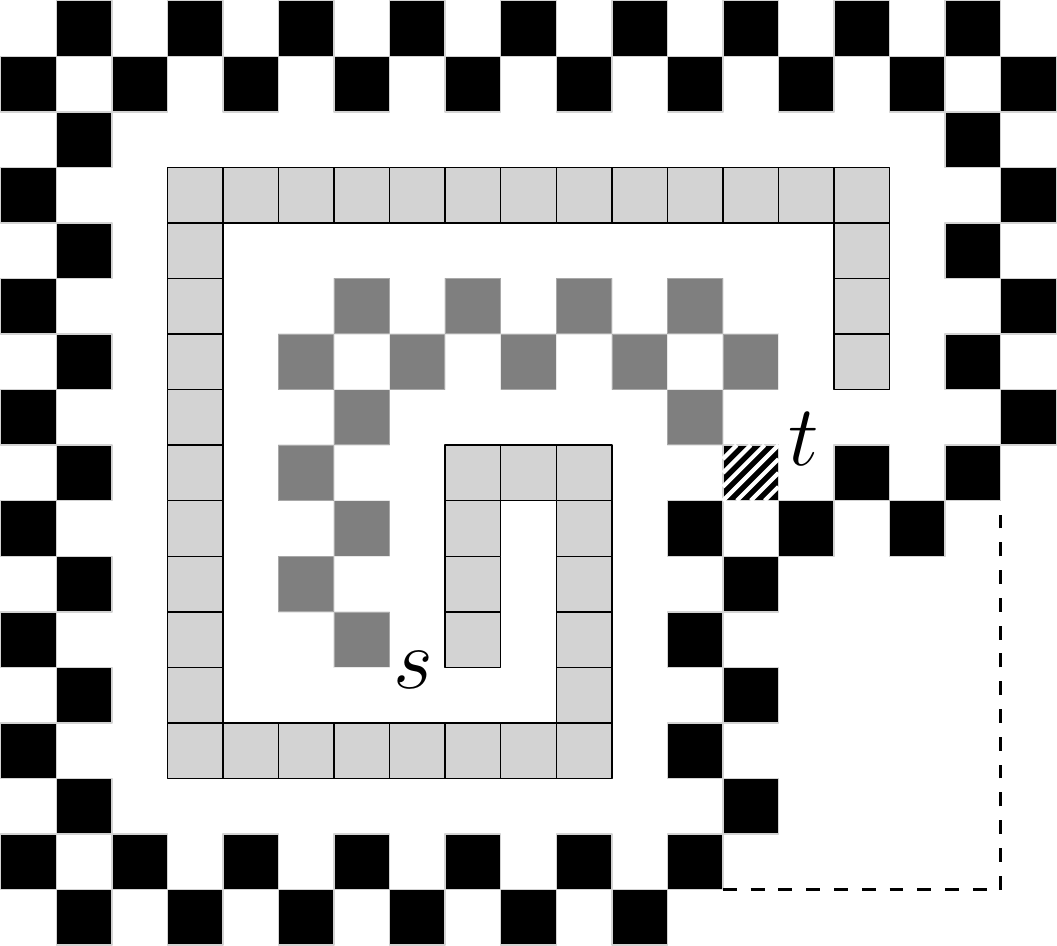}\hfill
	\includegraphics[width=0.43\columnwidth]{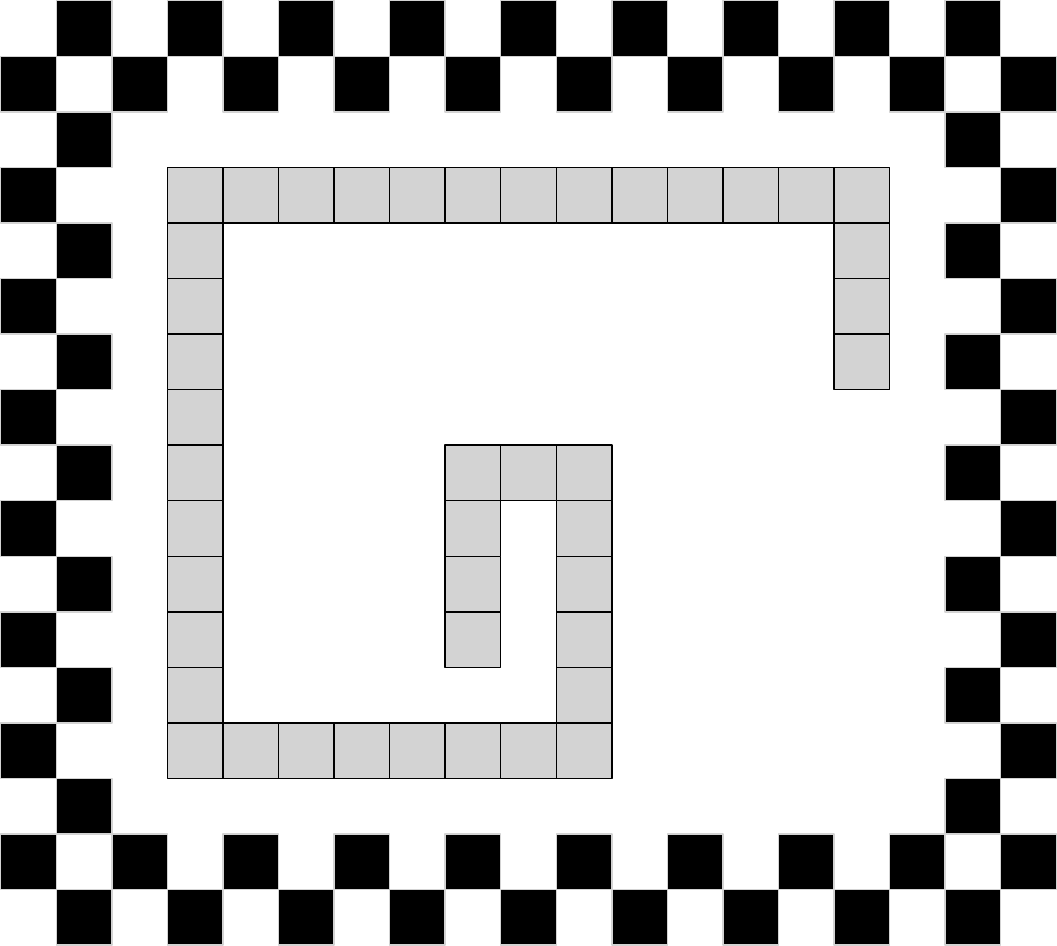}\hfill\phantom{}
	\caption{Left: During construction, starting in $s$, we reach a tile
$t$ (diagonal stripes) at which the bounding box is split into three
directions. The part between $s$ and $t$ (dark gray tiles) can be removed,
followed by straightening the bounding box along the dashed line. Right: The final
bounding box.} \label{fig:bb_cases}
\vspace{-0.4cm}
\end{figure}

\begin{theorem}\label{thm:bounding_box}
	The described strategy  builds a bounding box that encloses the given polyomino $P$ of width $w$ and height $h$. 
	Moreover, the strategy terminates after $\OCal(\max(w,h)\cdot wh)$ steps, using at most $\OCal(|\partial P|)$ auxiliary tiles and only $\OCal(w+h)$ of additional space. The running time in the best case is $\OCal(wh)$.
\end{theorem}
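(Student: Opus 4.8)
The plan is to split the statement into two independent parts: \textbf{(a)} correctness, i.e., the strategy terminates and the object it leaves behind is a bounding box enclosing $P$; and \textbf{(b)} the three quantitative bounds on tiles, space, and time (worst and best case). I would prove (a) by exhibiting invariants of phase (ii) and a termination argument, and then derive (b) from a charging argument over the wraps performed.

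For correctness I would first isolate the structural invariants maintained throughout phase (ii). The central one is \emph{conflict-freeness}: the inner lane of the partial path never shares a side or a corner with a tile of $P$, which holds because whenever placing a tile would create such a conflict, case~2 shifts the current side outward before proceeding. A second invariant is that the partial path is a \emph{monotone, right-turning staircase}: the robot only ever advances straight or makes right turns, so it winds around $P$ in a single clockwise orientation and can never enclose a region not containing $P$. This is exactly what the ``distance-two'' rule for case~1 enforces, as noted informally in the text (turning too early could close a small cycle missing $P$). I would then argue termination via monotonicity: each of the four sides is only pushed \emph{outward}, and since $P$ has finite width $w$ and height $h$ no side can be pushed beyond $\OCal(1)$ past the extreme tiles of $P$. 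The only non-monotone events are the full deconstructions that restart phase (i); each such restart resumes at a tile of $P$ strictly to the left of the previous start, so there are at most $\OCal(w) \subseteq \OCal(\max(w,h))$ of them. When case~3 finally fires, the path closes into a simple cycle containing $P$ at distance $\OCal(1)$; phase (iii) then straightens the at most two offending sides into an axis-aligned rectangle, which is the bounding box.

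For the quantitative bounds I would reason as follows. \emph{Tile complexity:} before a side is pushed out, the path hugs $\partial P$ at distance two, so at its peak it consists of $\OCal(|\partial P|)$ tiles, while the finalized box has perimeter $\OCal(w+h) \le \OCal(|\partial P|)$; hence at no moment are more than $\OCal(|\partial P|)$ auxiliary tiles present. \emph{Space complexity:} every pixel the robot occupies outside the bounding box of $P$ lies on (or within $\OCal(1)$ of) the finalized box, whose two lanes have total length $\OCal(w+h)$. \emph{Time complexity:} a single wrap traverses a path of length $\OCal(|\partial P|) \le \OCal(wh)$, and the shifts performed during one wrap cost $\OCal(wh)$ in total, since each side is re-walked once per unit it advances, the two horizontal sides (length $\OCal(w)$) advance $\OCal(h)$ units and the two vertical sides (length $\OCal(h)$) advance $\OCal(w)$ units. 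Multiplying the $\OCal(\max(w,h))$ restarts by the $\OCal(wh)$ cost per wrap yields the worst-case bound $\OCal(\max(w,h)\cdot wh)$. In the best case no conflicts and no restarts occur, so the cost is dominated by a single wrap plus finalization, i.e.\ $\OCal(wh)$.

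The main obstacle I anticipate is making the time charging rigorous, specifically the interaction between conflict-driven shifting and deconstruction. One must show that deconstruction never undoes progress more than a bounded number of times: that each \emph{full} deconstruction is paid for by the start position advancing strictly inward toward $P$ (so that the $\OCal(w)$ restart count is genuine, despite phase~(i) re-searching for a local minimum each time), and that each \emph{partial} deconstruction can be charged against the subsequent outward advance of a side, so no portion of the path is rebuilt too often. Producing a single potential function that simultaneously tracks the monotone outward motion of the four sides and the leftward motion of the start position, and verifying these are the only sources of repeated work, is the delicate step; by contrast, the correctness invariants reduce to local case checks on the three phases.
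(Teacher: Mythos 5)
Your proposal is correct and follows essentially the same route as the paper's proof: correctness via the distance-two/gift-wrapping invariant of phase (ii), the worst-case time bound obtained as $\OCal(\max(w,h))$ restarts times an $\OCal(wh)$ cost per wrapping attempt (the paper phrases this as each pixel inside the final box being visited $\OCal(1)$ times absent restarts, and up to $\OCal(\max(w,h))$ times with them), and the tile bound by charging construction tiles to $\partial P$. The one loose spot is your tile-complexity claim that the path ``hugs $\partial P$ at distance two''---this is false for sides after they have been shifted outward, which is exactly why the paper instead charges each construction tile to the first boundary tile hit by a perpendicular ray (plus locally convex tiles of $P$ for the turns); your bound still holds because shifting preserves segment length, but that step would need the paper's more careful charging to be rigorous.
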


\begin{proof}
	\shortversion{We can show that each pixel is visited a constant number of times, when the correct start has been found implying a running time of $\OCal(wh)$.
	Also, each tile of the bounding box can be charged to a tile of $\partial P$ such that each such tile has constant cost implying a tile complexity of $\OCal(\partial P)$.
	The details of the proof can be found in the appendix, see Section~\ref{app:bounding_box}.}
	\fullversion{
	\textbf{Correctness:}
	We show that (a) the bounding box will enclose $P$; and (b) whenever we make a turn or shift a side of the bounding box, we find a tile with distance two from the bounding box, i.e., we will not build an infinite line.
	
	First note that we only make a turn after encountering a tile with distance two to the inner lane.
	This means that we will ``gift wrap'' the polyomino until we reach the start.
	When there is a conflict (i.e., we would hit a tile of $P$ with the current line), we shift the current line.
	Thus, we again find a tile with distance two to the inner lane.
	This also happens when we remove the current line and extend the previous one. 
	After a short extension we make a turn and find a tile with distance two to the inner lane, meaning that we make another turn at some point.
	Therefore, we do not construct an infinite line, and the construction remains finite.
	
	\textbf{Time:}
	To establish a runtime of $\OCal(wh)$, we show that each pixel lying in the final bounding box is visited only a constant number of times.
	Consider a tile $t$ that is visited for the first time.
	We know that $t$ gets revisited again if the line through gets shifted or removed.
	When $t$ is no longer on the bounding box, we can visit $t$ again while searching for the start tile.
	Thus, $t$ is visited at most four times by the robot, implying a running time of $\OCal(wh)$ unit steps.
	However, it may happen that we have to remove the bounding box completely and have to restart the local minimum search.
	In this case, there may be tiles that can be visited up to $\max(w,h)$ times (see Fig.~\ref{fig:bb_worst_case}).
	Therefore, the running time is $\OCal(\max(w,h)\cdot wh)$ in the worst-case.
	\begin{figure}
		\centering	
		\includegraphics[width=0.8\columnwidth]{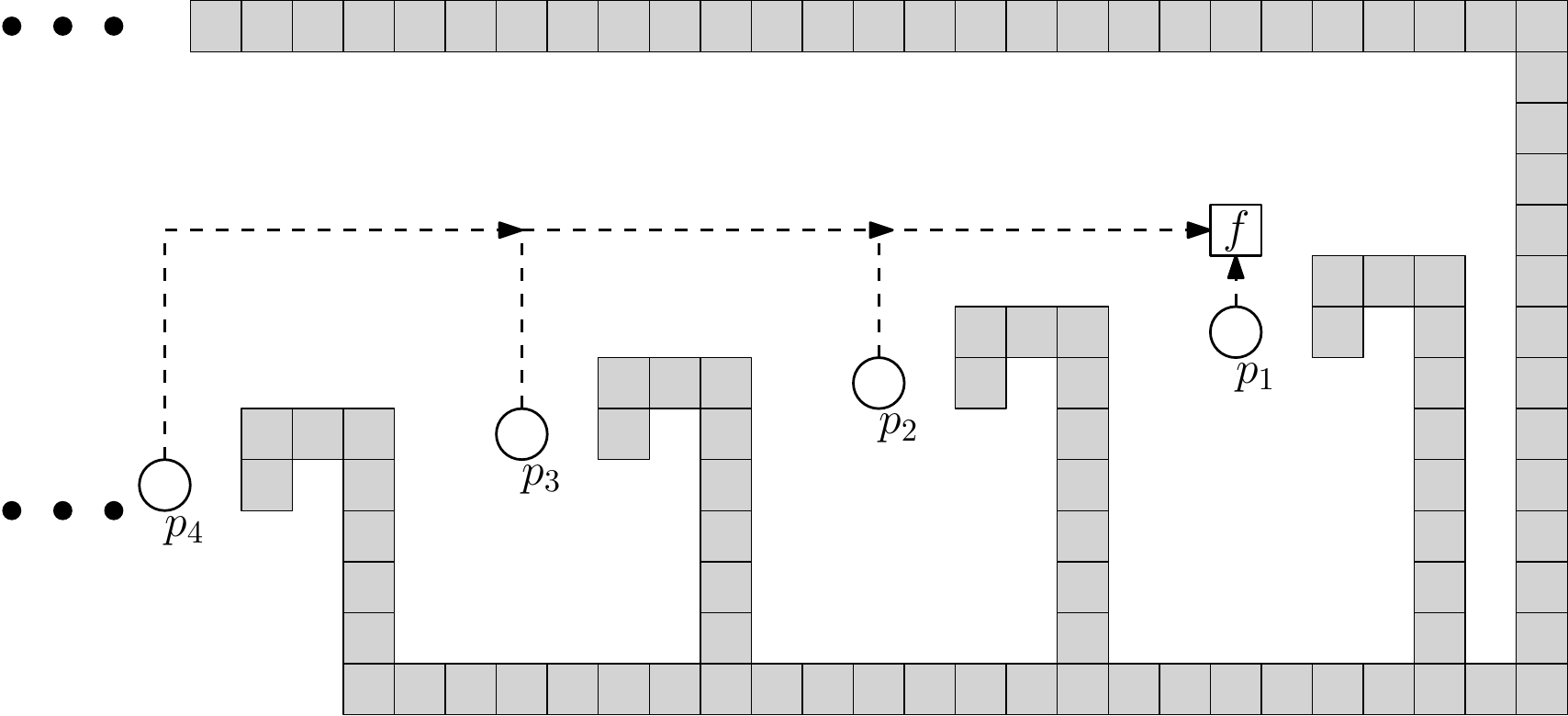}
		\caption{A worst-case example for building a bounding box.
			The positions $p_1$ to $p_4$ denote the first, second, third and fourth starting position of the robot during bounding box construction.
			With each restart, the pixel $f$ is visited at least once.
			Therefore, $f$ can be visited $\Omega(w)$ times.}
		\label{fig:bb_worst_case}
	\end{figure}
	
	\textbf{Auxiliary Tiles:}
	We now show that we need at most $\OCal(|\partial P|)$ many auxiliary tiles at any time.
	Consider a tile $t$ of $\partial P$ from which we can shoot a ray to a tile of the bounding box (or the intermediate construction), such that no other tile of $P$ is hit.
	For each tile $t$ of $\partial P$, there are at most four tiles $t_1,\dots, t_4$ of the bounding box.
	We can charge the cost of $t_1,\dots, t_4$ to $t$, which is constant. 
	Thus, each tile in $\partial P$ has been charged by $\OCal(1)$.
	However, there are still tiles on the bounding box that have not been charged to a tile of $\partial P$, i.e., tiles that lie in a curve of the bounding box.
	
	Consider a locally convex tile $t$, i.e., a tile at which we can place a $2\times 2$ square solely containing $t$. 
	Each turn of the bounding box can be charged to a locally convex tile.
	Note that there are at most four turns that can be charged to a locally convex tile.
	For each turn, there are at most four tiles that are charged to a locally convex tile.
	Thus, each tile of $\partial P$ has constant cost, i.e., we need at most $\OCal(\partial P)$ auxiliary tiles. 
	
	It is straightforward to see that we need $\OCal(w+h)$ additional space.}
\end{proof}

\subsection{Binary Counter}
For counting problems, a binary counter is indispensable, because we are not able to store non-constant numbers.
The binary counter for storing an $n$-bit number consists of a base-line of $n$ tiles.
Above each tile of the base-line there is either a tile denoting a 1, or an empty pixel denoting 0.
Given an $n$-bit counter we can \textit{increase} and \textit{decrease} the counter by 1 (or by any constant $c$), and \textit{extend} the counter by one bit.
The latter operation will only be used in an increase operation.
In order to perform an increase or decrease operation, we move to the least significant bit and start flipping the bit (i.e., remove or place tiles).

\begin{figure}
	\centering
	\includegraphics[width=0.55\columnwidth]{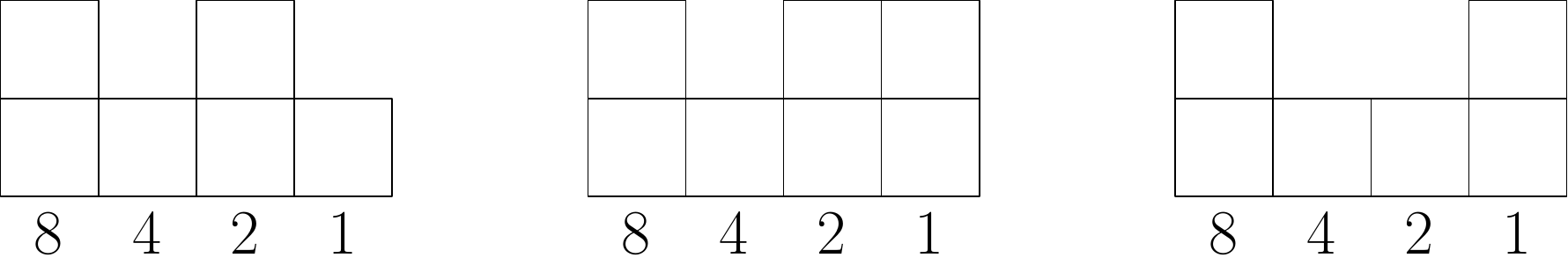}
	\caption{Left: A $4$-bit counter with decimal value 10 (1010 in binary). Middle: The binary counter increased by one. Right: The binary counter decreased by one.}
	\label{fig:binary_counter}
\end{figure}

	\section{Counting Problems}
The constant memory of our robot poses several challenges when we want to count certain elements of our polyomino, such as tiles or corners.
Because these counts can be arbitrarily large, we cannot store them in the state space of our robot.
Instead, we have to make use of a binary counter.
This requires us to move back and forth between the polyomino and the counter.
Therefore, we must be able to find and identify the counter coming from the polyomino and to find the way back to the position where we stopped counting.

This motivates the following strategy for counting problems.
We start by constructing a slightly extended bounding box.
We perform the actual counting by shifting the polyomino two units downwards or to the left, one tile at a time.
After moving each tile, we return to the counter and increase it if necessary.
We describe further details in the next two sections, where we present algorithms for counting the number of tiles or corners in a polyomino.

\subsection{Counting Tiles}
The total number of tiles in our polyomino can be counted using the strategy outlined above, increasing the counter by one after each moved tile.
\begin{theorem}\label{th:count_tiles}
	Let $P$ be a polyomino of width $w$ and height $h$ with $N$ tiles for which the bounding box has already been created.
	Counting the number of tiles in $P$ can be done in $\OCal(\max(w,h)\log N + N\min(w,h))$ steps using $\OCal(\max(w,h))$ of additional space and $\OCal(\log N)$ auxiliary tiles.
\end{theorem}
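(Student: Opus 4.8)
The plan is to instantiate the generic shift-and-count scheme sketched before the theorem and to analyze it carefully. Assume without loss of generality that $w \le h$, so that $\min(w,h)=w$ and $\max(w,h)=h$; the case $h<w$ is symmetric, with the roles of rows and columns and of ``downwards'' and ``to the left'' exchanged. Starting from the bounding box guaranteed by Theorem~\ref{thm:bounding_box}, I first place a binary counter, initialized to $0$, into a two-row gap directly below the lowest row of $P$; by the analysis of the binary counter it occupies only $\OCal(\log N)$ tiles. I then sweep $P$ from the bottom row to the top row, moving every tile of the current row two units downwards, one tile at a time, so that after the whole sweep the polyomino has been translated down by two units. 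Shifting by exactly two rather than one is essential: it leaves a two-row empty gap between the tiles already moved and those still to be processed, and this gap is detectable from purely local information, so the robot can always tell a counted tile from an uncounted one and can relocate the boundary between the two regions.

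The key efficiency idea is that the counter does not stay at a fixed location but travels with this boundary: while the current (lowest remaining) row is being processed, the counter sits in the empty gap immediately below it, with its least significant bit at the left end. To count one tile I move it two units down into the processed block, walk back to the least significant bit, and perform a single increment. Since the counter is kept adjacent to the row, each such round trip has length only $\OCal(w)=\OCal(\min(w,h))$, so over all $N$ tiles this contributes $\OCal(N\min(w,h))$ steps; the increments themselves, including carry propagation along the $\OCal(\log N)$ bits, cost $\OCal(1)$ amortized per tile by the standard binary-counter argument, adding only $\OCal(N)$ in total. Whenever a row is finished, the empty gap moves up by one, so I shift the entire counter up by one row before processing the next row; each shift touches all $\OCal(\log N)$ bits, and since there are $\OCal(h)=\OCal(\max(w,h))$ rows this accounts for the remaining $\OCal(\max(w,h)\log N)$ steps. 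Summing the two contributions yields the claimed time bound.

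For the remaining resources, correctness follows from the loop invariant that, before each row is processed, the counter equals the number of tiles lying strictly below the current gap, which is exactly the number of tiles already moved; hence at termination the counter holds $N$. The tile complexity is dominated by the counter, giving $\OCal(\log N)$ auxiliary tiles plus a constant number of markers. For the space bound, the only pixels visited outside the bounding box are those in the two-unit strip into which $P$ is shifted and the counter cells while they reside in that strip; both lie within an $\OCal(\max(w,h))$-sized extension, so the additional space is $\OCal(\max(w,h))$.

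I expect the main obstacle to be the purely local, finite-state realization of the bookkeeping rather than the complexity accounting. Concretely, the transition function must, using only the eight cells around the robot together with the two-row gap and a constant number of markers, reliably (i) locate the next uncounted tile of the current row and detect when the row is exhausted, (ii) find the least significant bit of the counter and afterwards return to the exact position in the row where counting was interrupted, and (iii) carry out the upward shift of the counter and the advance to the next row without ever confusing counter tiles, gap, processed tiles, and unprocessed tiles. Making these navigation primitives work with a fixed-size automaton, and then charging every step cleanly to either a tile round trip or a counter shift so that the amortized increment cost is $\OCal(N)$, is where the real care is needed; the geometric translation argument and the $\OCal(\log N)$ counter length are comparatively routine.
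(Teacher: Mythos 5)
Your overall scheme coincides with the paper's: sweep the polyomino row by row, move each tile two units down so that a detectable two-row gap separates counted from uncounted tiles, keep the counter adjacent to the current row so that each round trip costs $\OCal(\min(w,h))$, shift the counter up once per finished row (giving the $\OCal(\max(w,h)\log N)$ term), and amortize carry propagation. The complexity accounting is correct and matches the paper's proof.

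However, there is a genuine geometric flaw in where you put the counter. You place it horizontally \emph{inside} the two-row gap below the current row, but that gap is precisely the landing zone of the tiles being counted: a tile in column $x$ of the current row is deposited two units down, i.e., into the lower row of the gap. With the paper's binary-counter gadget (a base-line of tiles plus a bit lane above it, hence a two-row structure) the counter occupies the gap, so every tile of the current row whose column lies within the counter's span lands on a pixel that is already occupied by (or reserved for) the counter; the shifted polyomino and the counter then merge into one indistinguishable blob, corrupting both the count and the shape. A second problem is that a horizontal counter must fit within the width of the region: for a thin, tall polyomino (say $w=2$ and $h$ large) one has $\log N > \min(w,h)$, so the counter cannot reside inside the gap at all and would have to cross the bounding box wall. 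The paper avoids both issues with a single placement decision: the counter is a \emph{vertical} column \emph{outside} the bounding box, to its left, with the least significant bit aligned with the row currently being processed. The landing zone (inside the box) and the counter (outside the box) are then disjoint, the box wall cleanly separates counter tiles from polyomino tiles, and the per-tile round trip is still $\OCal(\min(w,h))$. Your proposal can be repaired, e.g., by a one-row counter encoding confined to the upper gap row only, or simply by adopting the paper's placement; but as written, the step ``move it two units down into the processed block'' fails whenever the landing pixel belongs to your counter, and this is not merely a matter of finite-state bookkeeping as your last paragraph suggests.
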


\begin{proof}
	\begin{figure}[t]
		\centering
		\includegraphics[width=0.8\linewidth]{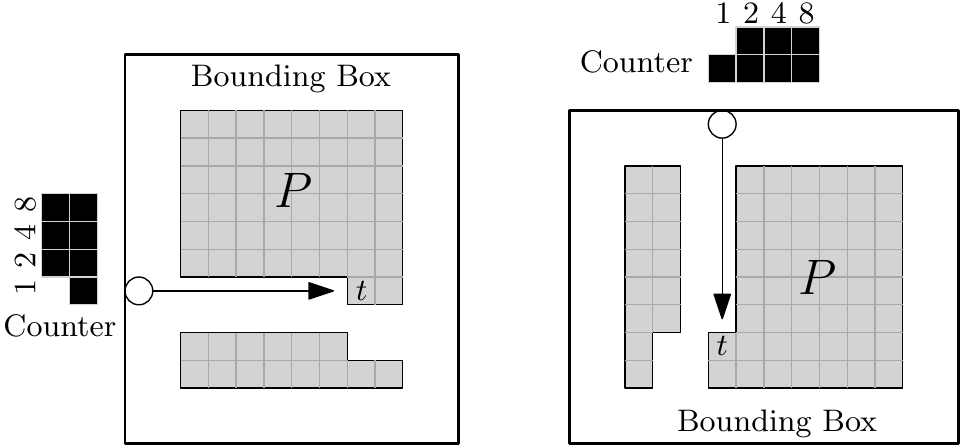}
		\caption{
			An $8 \times 8$ square $P$ in a bounding box, being counted row-by-row (left) or column-by-column (right).
			The robot has already counted 14 tiles of $P$ and stored this information in the binary counter.
			The arrow shows how the robot ($\bigcirc$) moves to count the next tile $t$.
		}
		\label{fig:counting_example}
	\end{figure}
	
In a first step, we determine whether the polyomino's width is greater than its height or vice versa.
We can do this by moving to the lower left corner of the bounding box and then alternatingly moving up and right one step until we meet the bounding box again.
We can recognize the bounding box by a local lookup based on its zig-zag shape that contains tiles only connected by a corner, which cannot occur in a polyomino.
The height is at least as high as the width if we end up on the right side of the bounding box.
In the following, we describe our counting procedure for the case that the polyomino is higher than wide; the other case is analogous.

We start by extending the bounding box by shifting its bottom line down by two units.
Afterwards we create a vertical binary counter to the left of the bounding box.
We begin counting tiles in the bottom row of the polyomino.
We keep our counter in a column to the left of the bounding box such that the least significant bit at the bottom of the counter is in the row, in which we are currently counting.
We move to the right into the current row until we find the first tile.
If this tile is part of the bounding box, the current row is done. In this case, we move back to the counter, shift it upwards and continue with the next row until all rows are done.
Otherwise, we simply move the current tile down two units, return to the counter and increment it.
For an example of this procedure, refer to Fig.~\ref{fig:counting_example}.

For each tile in the polyomino, we use $\OCal(\min(w,h))$ steps to move to the tile, shift it and return to the counter; incrementing the counter itself only takes $\OCal(1)$ amortized time per tile.
For each empty pixel we have cost $\OCal(1)$.
In addition, we have to shift the counter $\max(w,h)$ times.
Thus, we need $\OCal(\max(w,h)\log N + \min(w,h)N+wh)=\OCal(\max(w,h)\log N + \min(w,h)N)$ unit steps.
We only use $\OCal(\log N)$ auxiliary tiles in the counter, in addition to the bounding box.

In order to achieve $\OCal(1)$ additional space, we modify the procedure as follows.
Whenever we move our counter, we check whether it extends into the space above the bounding box.
If it does, we reflect the counter vertically, such that the least significant bit is at the top in the row, in which we are currently counting.
This requires $\OCal(\log^2 N)$ extra steps and avoids using $\OCal(\log N)$ additional space above the polyomino.
\end{proof}

\subsection{Counting Corners}
In this section, we present an algorithm to count reflex or convex corners of a given polyomino.

\begin{theorem}
	Let $P$ be a polyomino of width $w$ and height $h$ with $n$ convex (reflex) corners for which the bounding box has already been created.
	Counting the number of $k$ convex (reflex) corners in $P$ can be done in $\OCal(\max(w,h)\log k + k\min(w,h)+$ 
	$wh)$ steps, using $\OCal(\max(w,h))$ of additional space and $\OCal(\log k)$ auxiliary tiles.
\end{theorem}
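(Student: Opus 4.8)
The plan is to adapt the tile-counting strategy of Theorem~\ref{th:count_tiles} almost verbatim, replacing ``increment once per tile'' with ``increment once per detected corner.'' As before, I would first construct the extended bounding box, determine whether $P$ is higher than wide (so that the longer dimension becomes the counting direction), and place a binary counter along the left side of the bounding box whose least significant bit tracks the current row. The essential new ingredient is a \emph{local test} that, upon visiting a tile $t$ of $P$, decides whether $t$ is a convex (resp.\ reflex) corner. Since the robot can inspect all eight pixels in its neighborhood in a single step (as established at the start of Section on Basic Tools), I would enumerate the boundary patterns around $t$: a convex corner is witnessed by a tile having two adjacent empty sides meeting at a right angle (e.g.\ the pixels above and to the right both empty), while a reflex corner is detected by the complementary pattern. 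The counter is incremented by the number of such corner-incidences at $t$ (a single tile can carry up to the appropriate constant number of corner types), each increment costing $\OCal(1)$ amortized.

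The key steps, in order, are: (i) build the extended bounding box and orient the counting along $\max(w,h)$; (ii) sweep $P$ row by row (column by column in the transposed case), moving each tile two units down as in the tile count, so that the polyomino is gradually shifted out of the way and no tile is processed twice; (iii) at each tile perform the constant-time local corner test and increment the counter by the constant number of convex (reflex) corners incident to that tile; (iv) shift the counter upward between rows, reflecting it vertically whenever it would exceed the bounding box, exactly as in the proof of Theorem~\ref{th:count_tiles}, to hold the additional space to $\OCal(\max(w,h))$. The bound on auxiliary tiles is $\OCal(\log k)$ because the counter never exceeds the bit-length of the final count $k$, and the counter-shifting contributes at most $\OCal(\max(w,h)\log k)$ steps overall.

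For the running-time accounting I would charge costs the same way as in the tile count: each of the $\OCal(wh)$ pixels is visited a constant number of times during the sweep, giving the additive $wh$ term that is new compared to Theorem~\ref{th:count_tiles}; moving and counting the $N$ tiles costs $\OCal(k\min(w,h))$ in the relevant regime since only corner-bearing tiles trigger the expensive return-trip to increment; and the $\OCal(\max(w,h)\log k)$ term comes from shifting and reflecting the counter. Summing yields $\OCal(\max(w,h)\log k + k\min(w,h) + wh)$ as claimed.

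The main obstacle I anticipate is making the local corner test both correct and self-contained under the polyomino definition given in the preliminaries. Because the model forbids certain diagonal configurations (the ``illegal diagonal pair''), the eight-neighborhood patterns that can actually arise are constrained, and I must verify that convex and reflex corners are characterized \emph{exactly} by local patterns, with no global ambiguity and no double-counting when two corners are incident to the same tile. A secondary subtlety is ensuring that the corner test is not corrupted by the shifting of already-processed tiles or by the presence of the bounding-box tiles themselves; distinguishing bounding-box tiles from $P$-tiles via the zig-zag signature (connected only at corners, impossible inside a polyomino) resolves this, but it must be invoked carefully at the boundary rows.
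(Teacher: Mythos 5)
Your overall structure matches the paper's proof: it reuses the tile-counting sweep of Theorem~\ref{th:count_tiles} verbatim, performs a constant-size local test at each newly reached tile, and charges the return trips to the counter only to corner-bearing tiles, which is exactly how the $\OCal(k\min(w,h))$ and additive $\OCal(wh)$ terms arise. Your convex-corner rule (both orthogonal neighbors of the corner empty) is also the paper's rule, and it is sound because the forbidden diagonal-pair configuration rules out the case where the diagonal neighbor is occupied while both orthogonal ones are empty.

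However, there is a genuine gap in the reflex-corner case, and it is precisely the part where the paper does its only real work beyond Theorem~\ref{th:count_tiles}. You say a reflex corner is ``detected by the complementary pattern'' and defer the double-counting issue as an anticipated obstacle; but the obstacle you name (several corners incident to one tile) is the harmless direction --- the problem is the converse. A reflex corner is incident to exactly \emph{three} tiles of $P$, so any symmetric local pattern test fires three times per reflex corner, and your algorithm would report $3k$ instead of $k$. What is missing is an \emph{attribution rule} that assigns each reflex corner to exactly one of its three incident tiles via an asymmetric local test. The paper's rule is: a tile $t$ counts the corner above it --- the upper-right corner of $t$ is counted as reflex if exactly two of the pixels to the right, diagonally upper-right, and above $t$ are occupied; the upper-left corner of $t$ is counted as reflex only if the pixels above and diagonally upper-left are occupied while the pixel to the left is empty. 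A short case analysis over which of the four pixels around a reflex corner is the empty one shows every reflex corner is counted exactly once under this rule; without such a rule (or an equivalent canonical-representative convention), your proof does not establish the claimed count, only a constant-factor overcount.
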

\begin{figure}[t]
	\centering
	\hfil
	\includegraphics[scale=0.4]{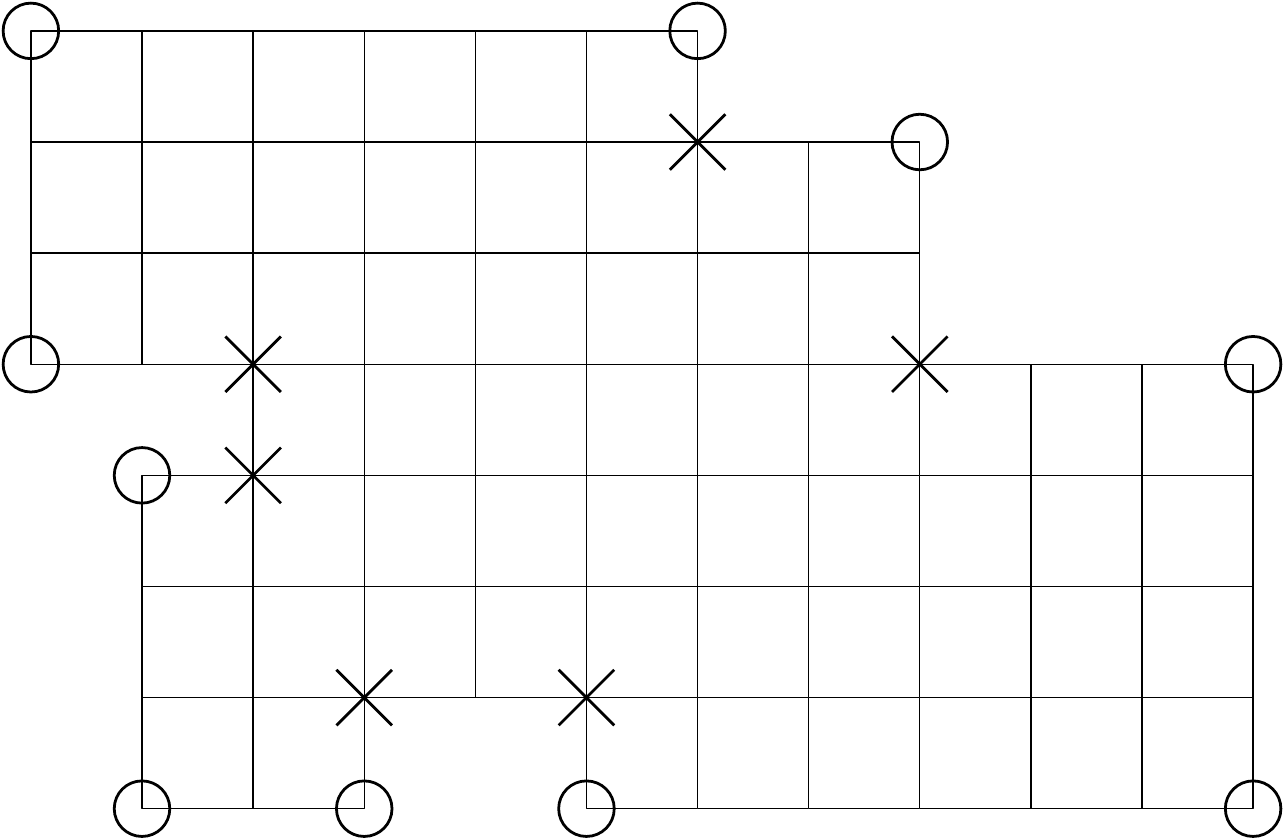}
	\hfil
	\includegraphics[scale=0.5]{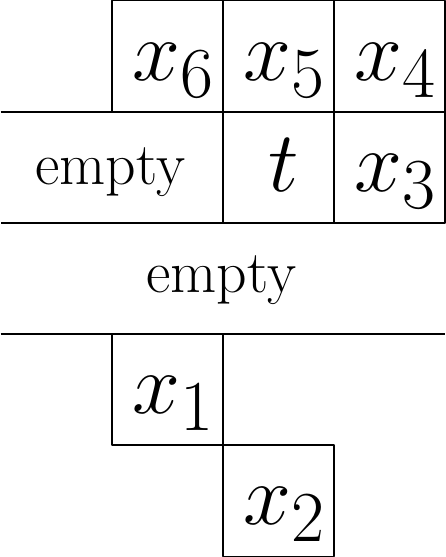}\hfil\hfil
	
	\caption{
		\emph{Left:} A polyomino and its convex corners ($\circ$) and reflex corners ($\times$).
		\emph{Right:} After reaching a new tile $t$, we have to know which of the pixel $x_1$ to $x_6$ are occupied to decide how many convex (reflex) corners $t$ has.}
	\label{fig:counting_tiles}
\end{figure}
\begin{proof}
	We use the same strategy as for counting tiles in Theorem~\ref{th:count_tiles} and retain the same asymptotic bounds on running time, auxiliary tiles and additional space.
	It remains to describe how we recognize convex and reflex corners.

	Whenever we reach a tile $t$ that we have not yet considered so far, we examine its neighbors $x_1,\ldots,x_6$, as shown in Fig.~\ref{fig:counting_tiles}.
	The tile $t$ has one convex corner for each pair $(x_1,x_2)$, $(x_2,x_3)$, $(x_3,x_5)$, $(x_5,x_1)$ that does not contain a tile, and no convex corner is contained in more than one tile of the polyomino.
	As there are at most four convex corners per tile, we can simply store this number in the state space of our robot, return to the counter, and increment it accordingly.

	A reflex corner is shared by exactly three tiles of the polyomino, so we have to ensure that each corner is counted exactly once.
	We achieve this by only counting a reflex corner if it is on top of our current tile $t$ and was not already counted with the previous tile $x_1$.
	In other words, we count the upper right corner of $t$ as a reflex corner if exactly two of $x_3,x_4,x_5$ are occupied; we count the upper left corner of $t$ as reflex corner if $x_6$ and $x_5$ are present and $x_1$ is not.
	In this way, all reflex corners are counted exactly once.
\end{proof}

\section{Transformations with Turing machines}
In this section we develop a robot that transforms a polyomino~$P_1$ into a required target polyomino $P_2$. In particular, we encode $P_1$ and $P_2$ by strings $S(P_1)$ and $S(P_2)$ whose characters are from $\{ 0,1,\sqcup\}$ (see Fig.~\ref{fig:TM} left and Definition~\ref{def:bin_repr}). If there is a Turing machine transforming $S(P_1)$ into $S(P_2)$, we can give a strategy that transforms $P_1$ into $P_2$ (see Theorem~\ref{thm:TM_strategy}).

We start with the definition of the encodings $S(P_1)$ and $S(P_2)$.

\begin{figure}[t]
	\begin{center}\hfill
	\includegraphics[width=0.25\columnwidth]{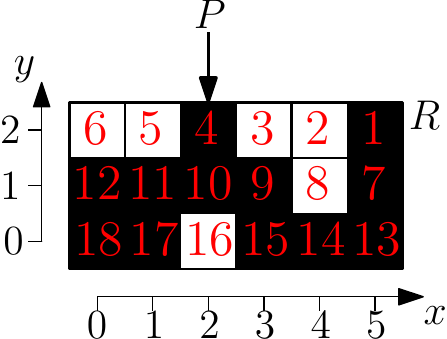}\hfill
	\includegraphics[width=0.68\columnwidth]{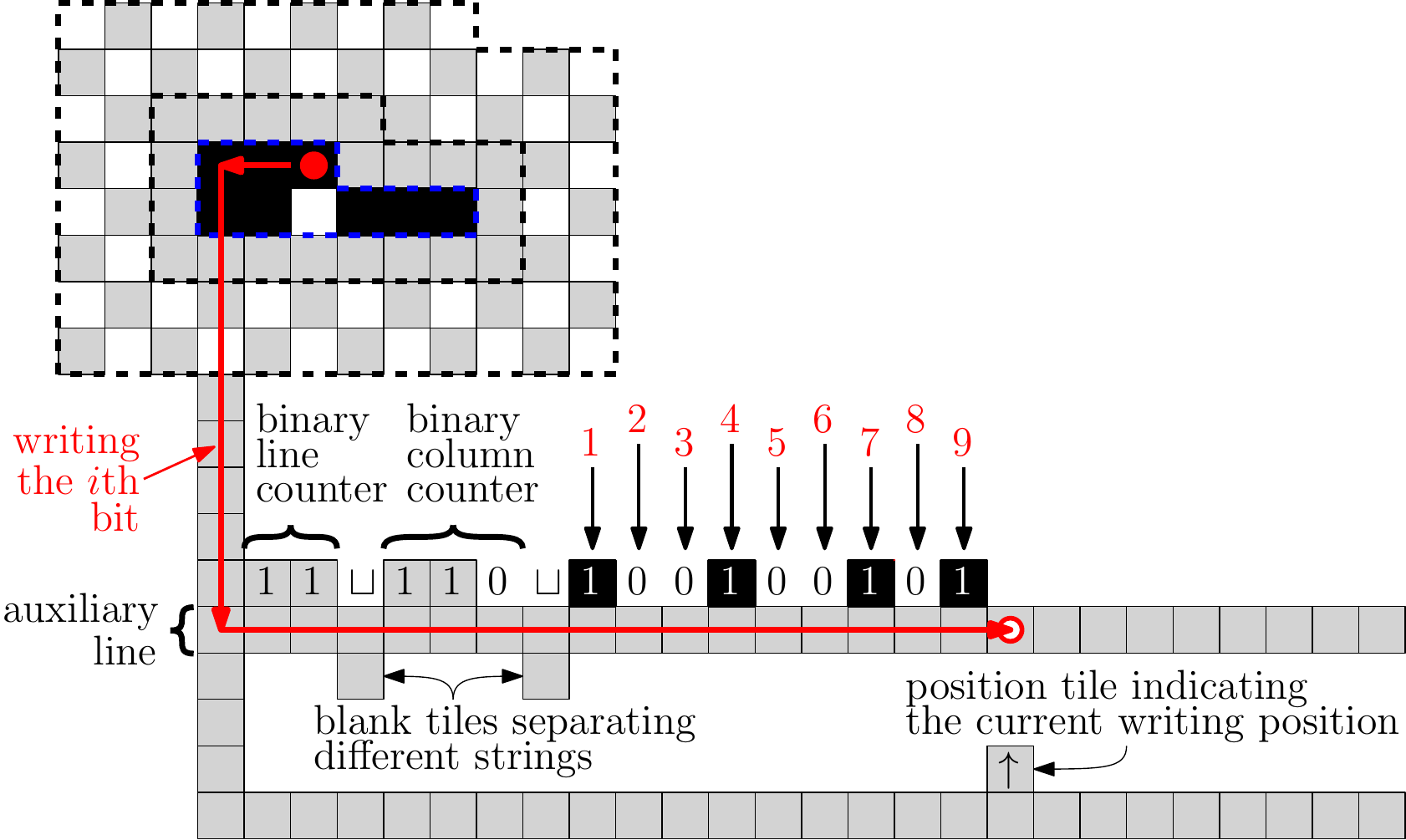}\hfill\phantom{}
	\end{center}
	\caption{Left: \ignore{The string encoding $S(P)= S_1 \sqcup S_2 \sqcup S_3 =11 \sqcup 110 \sqcup 100100101111111011$ of a black polyomino $P$. $S_1$~and $S_2$ are the height and the width of the smallest enclosing rectangle~$R$ of $P$. $S_3$ indicates whether the $i$th tile of $R$ belongs to $P$ or not.}
		 An example showing how to encode a polyomino of height $h=3$ ($S_1=11$ in binary) and width $w=6$ ($S_2=110$ in binary) by $S(P)=S_1 \sqcup S_2 \sqcup S_3=11 \sqcup 110 \sqcup 100100101111111011$, where $S_3$ is the string of 18 bits that represent tiles (black, 1 in binary) and empty pixel (white, 0 in binary), proceeding from high bits to low bits.
		 Right: Phase (2) writing $S(P)$ (currently the $10$th bit) on a horizontal auxiliary line.}
	\label{fig:TM}
	
\end{figure}

\begin{definition}\label{def:bin_repr}
	Let $R:=R(P)$ be the polyomino forming the smallest rectangle containing a given polyomino $P$ (see Fig.~\ref{fig:TM} right). We represent $P$ by the concatenation $S(P):=S_1 \sqcup S_2 \sqcup S_3$ of three bit strings $S_1$, $S_2$, and $S_3$ separated by blanks~$\sqcup$. In particular, $S_1$ and $S_2$ are the height and width of $R$. Furthermore, we label each tile of $R$ by its rank in the lexicographic decreasing order w.r.t. $y$- and $x$-coordinates with higher priority to $y$-coordinates (see Fig.~\ref{fig:TM} right).  Finally, $S_3$ is an $|R|$-bit string $b_1,\dots,b_{|R|}$ with $b_i = 1$ if and only if the $i$th tile in $R$ also belongs to $P$.
	\end{definition}
	

\begin{theorem}\label{thm:TM_strategy}
	Let $P_1$ and $P_2$ be two polyominos with $|P_1| = |P_2| = N$. 
	There is a strategy transforming $P_1$ into $P_2$ if there is a Turing machine transforming $S(P_1)$ into $S(P_2)$. 
	The robot needs $\mathcal{O}(\partial P_1 + \partial P_2 + S_{TM})$ auxiliary tiles, $\mathcal{O}(N^4 + T_{TM})$ steps, and $\Theta (N^2 + S_{TM})$ of additional space, where $T_{TM}$ and $S_{TM}$ are the number of  steps and additional space needed by the Turing machine.
\end{theorem}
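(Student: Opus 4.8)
The plan is to simulate the given Turing machine $M$ by using a horizontal auxiliary line of tiles as the TM tape, with the binary counter and tile-encoding machinery developed earlier serving as the interface between the geometric polyomino $P_1$ and the symbolic string $S(P_1)$. The whole construction proceeds in three phases: \textbf{(1)} read $P_1$ and encode it as the string $S(P_1)$ written on an auxiliary tape; \textbf{(2)} run the robot as a finite automaton simulating $M$ on this tape, transforming $S(P_1)$ into $S(P_2)$; and \textbf{(3)} decode $S(P_2)$ and physically build $P_2$. Since the robot is already a deterministic finite automaton and a TM is precisely a finite-state control operating on a tape, the heart of phase (2) is a direct correspondence: the robot's position on the auxiliary line plays the role of the TM head, the tile/empty states at each position of the line (together with a blank marker $\sqcup$, which we encode using a small constant-size gadget of pixels since our alphabet is only occupied/empty) encode the tape symbols, and the robot's internal state stores $M$'s current state. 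Because $M$ has a constant number of states and a constant tape alphabet, the transition function $\delta$ of the robot can hard-code $M$'s transition table, so one TM step is simulated in $\OCal(1)$ robot steps plus the $\OCal(N^2)$ cost of moving the head to an adjacent tape cell along the line.

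First I would construct the bounding box (Theorem~\ref{thm:bounding_box}) and then, following Fig.~\ref{fig:TM} right, lay out the auxiliary tape. For phase (1), I compute $S_1$ and $S_2$ (the height and width of $R$) using the binary counter; this is $\OCal(\log N)$ work writing each value. The bulk is $S_3$: I sweep through the $|R| = \Theta(N)$ pixels of the enclosing rectangle in the prescribed lexicographic order, and for each one append a bit $b_i$ to the tape recording whether that pixel of $R$ is occupied in $P_1$. Because the robot must shuttle between the current pixel of $R$ and the growing end of the tape, and each such round trip costs $\OCal(N)$ steps over a structure of size $\Theta(N)$, phase (1) costs $\OCal(N^2)$ steps and produces a tape of length $\Theta(N)$, hence $\Theta(N^2)$-ish layout considerations but $\Theta(N)$ tape length; the additional space is $\Theta(N^2 + S_{TM})$ because the TM may expand the tape by $S_{TM}$ cells. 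Phase (3) is the reverse of phase (1): read $S(P_2)$ off the tape and, pixel by pixel, place or remove tiles inside the target rectangle, again at $\OCal(N^2)$ cost. The auxiliary tile count is $\OCal(\partial P_1 + \partial P_2 + S_{TM})$, since the bounding boxes of the input and output cost $\OCal(\partial P_1)$ and $\OCal(\partial P_2)$ tiles by Theorem~\ref{thm:bounding_box}, and the tape uses $\OCal(S_{TM})$ additional cells beyond the $\OCal(N)$ encoding the polyomino.

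The dominant term in the running time deserves care. Each of the $T_{TM}$ simulated TM steps may require moving the head across the full length of the tape if we are not careful, but in fact a single TM step only moves the head by one cell, so the per-step cost is $\OCal(1)$ control plus $\OCal(1)$ cell-to-cell motion along the line once the head is positioned — the $T_{TM}$ contribution is additive as claimed. The $\OCal(N^4)$ term comes not from the simulation but from phases (1) and (3): writing and reading a tape of $\Theta(N)$ bits, where accessing bit $i$ requires traversing an $\OCal(N)$-length structure and where encoding/decoding each bit of $S_3$ requires locating the corresponding pixel of $R$, yields an $\OCal(N^2)$-per-bit cost over $\Theta(N)$ bits in the worst case, i.e. $\OCal(N^4)$ setup; I would verify the exact exponent by bounding the cost of one encode/decode round trip and multiplying by the $\Theta(N)$ cells.

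The main obstacle, I expect, is the \emph{interface bookkeeping} rather than the TM simulation itself. The robot has only a constant-size state and an alphabet of two pixel states, so I must design constant-size gadgets that (a) represent the three tape symbols $\{0,1,\sqcup\}$ over a binary occupied/empty alphabet without ambiguity, (b) let the robot recognize the tape's left and right ends and the current head position after it leaves the tape to inspect a pixel of $R$ and returns, and (c) cleanly separate the tape, the counter, and the polyomino so the robot never confuses one structure for another — exactly the ``find and identify the counter and find the way back'' difficulty flagged in the counting section. Marking the head position with a distinguished auxiliary tile (or a local pattern) that the robot can relocate after each excursion is the key trick; once that is in place, correctness of the simulation follows because the robot faithfully reproduces $M$'s configuration at every step, and the complexity bounds follow by summing the phase costs as above.
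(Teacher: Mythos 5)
Your overall plan --- encode $P_1$ as a string written on an auxiliary line, let the robot itself act as the head of the Turing machine on that line (hard-coding the transition table in the robot's states, with constant-size pixel gadgets for the three-symbol alphabet and a marker tile to recover the writing/head position), then run the decoding in reverse --- is essentially the paper's proof. The paper's five phases are your three with the bounding-box construction and its removal split off as separate phases, and your ``distinguished head-position tile'' is exactly the paper's position tile on the auxiliary line $\ell_3$.

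There is, however, a genuine gap in your quantitative analysis: you repeatedly assert that the tape has length $\Theta(N)$. By Definition~\ref{def:bin_repr}, $S_3$ contains one bit per \emph{pixel of the enclosing rectangle} $R$, not per tile of $P_1$, so $|S(P_1)| = \Theta(wh)$, which is $\Theta(N^2)$ in the worst case (e.g., a staircase-shaped polyomino with $w,h \in \Theta(N)$). This single fact is what generates the theorem's bounds: the tape alone forces $\Theta(N^2 + S_{TM})$ additional space (your ``$\Theta(N^2)$-ish layout considerations'' leave this unexplained), and each of the $\Theta(N^2)$ bits requires a round trip of length $\OCal(N^2)$ between the current pixel of $R$ and the position marker at the end of the tape, giving the $\OCal(N^2)\cdot\OCal(N^2) = \OCal(N^4)$ term. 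Your own accounting is internally inconsistent on this point: you price phase (1) at $\OCal(N^2)$ steps in one paragraph and attribute $\OCal(N^4)$ to it in the next, and the computation you actually carry out there ($\OCal(N^2)$ per bit times $\Theta(N)$ bits) yields $\OCal(N^3)$, not $\OCal(N^4)$. The algorithm you describe would work, but as written your derivation does not establish the claimed time and space bounds; the fix is to replace every ``$\Theta(N)$ tape cells'' by $\Theta(wh) = \OCal(N^2)$ and redo the sums.
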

\begin{proof}
	\begin{figure}[htbp]
		\centering
		\subfigure[State after counting numbers of lines and columns.]{\includegraphics[width=0.74\textwidth]{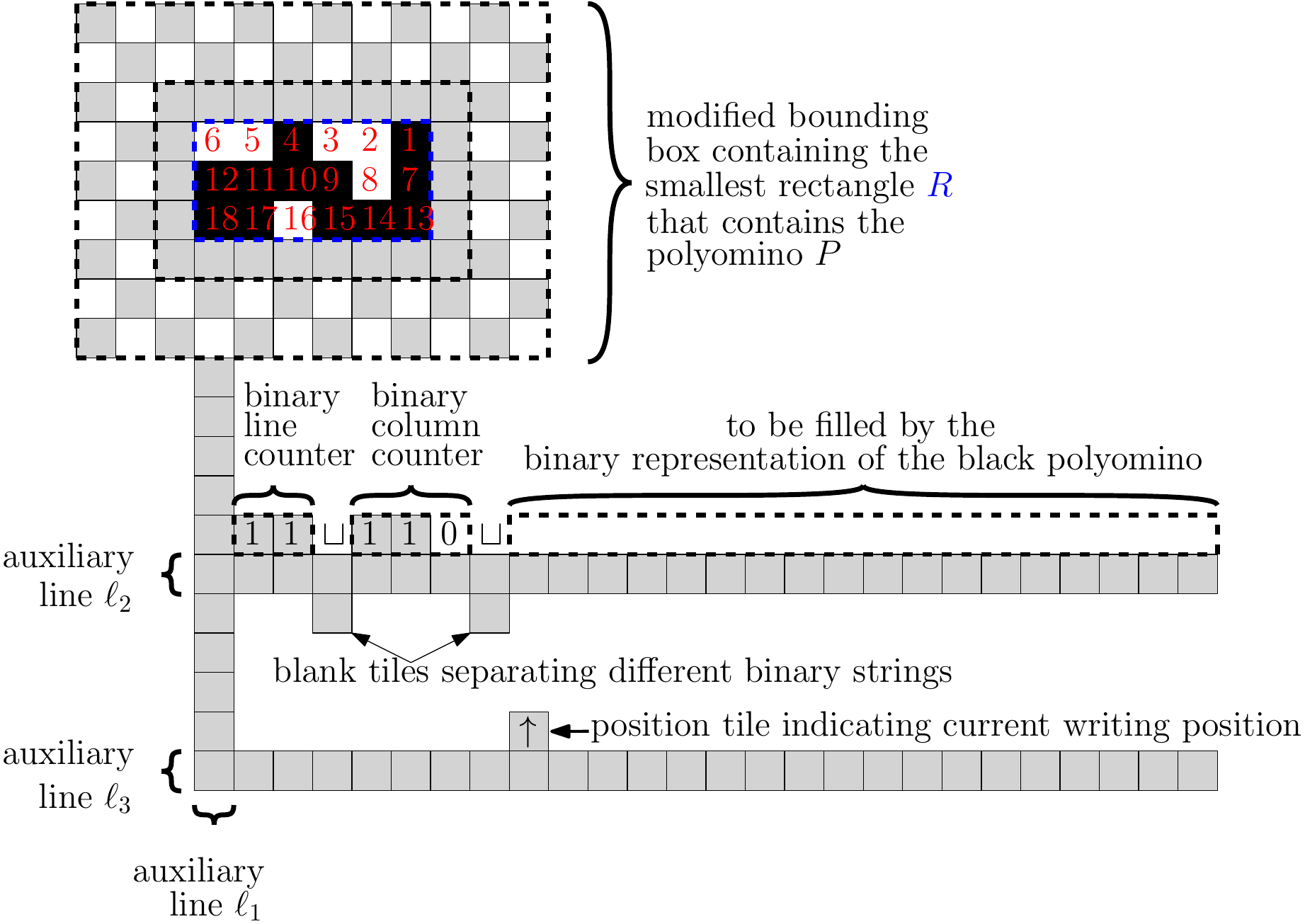}}\label{fig:TMa}
		\subfigure[Intermediate state of Phase (2) while writing the string representation $S(P_1)$ of the input polyomino $P_1$ by placing $R$ line by line onto $\ell_1$.] {\includegraphics[width=0.7\textwidth]{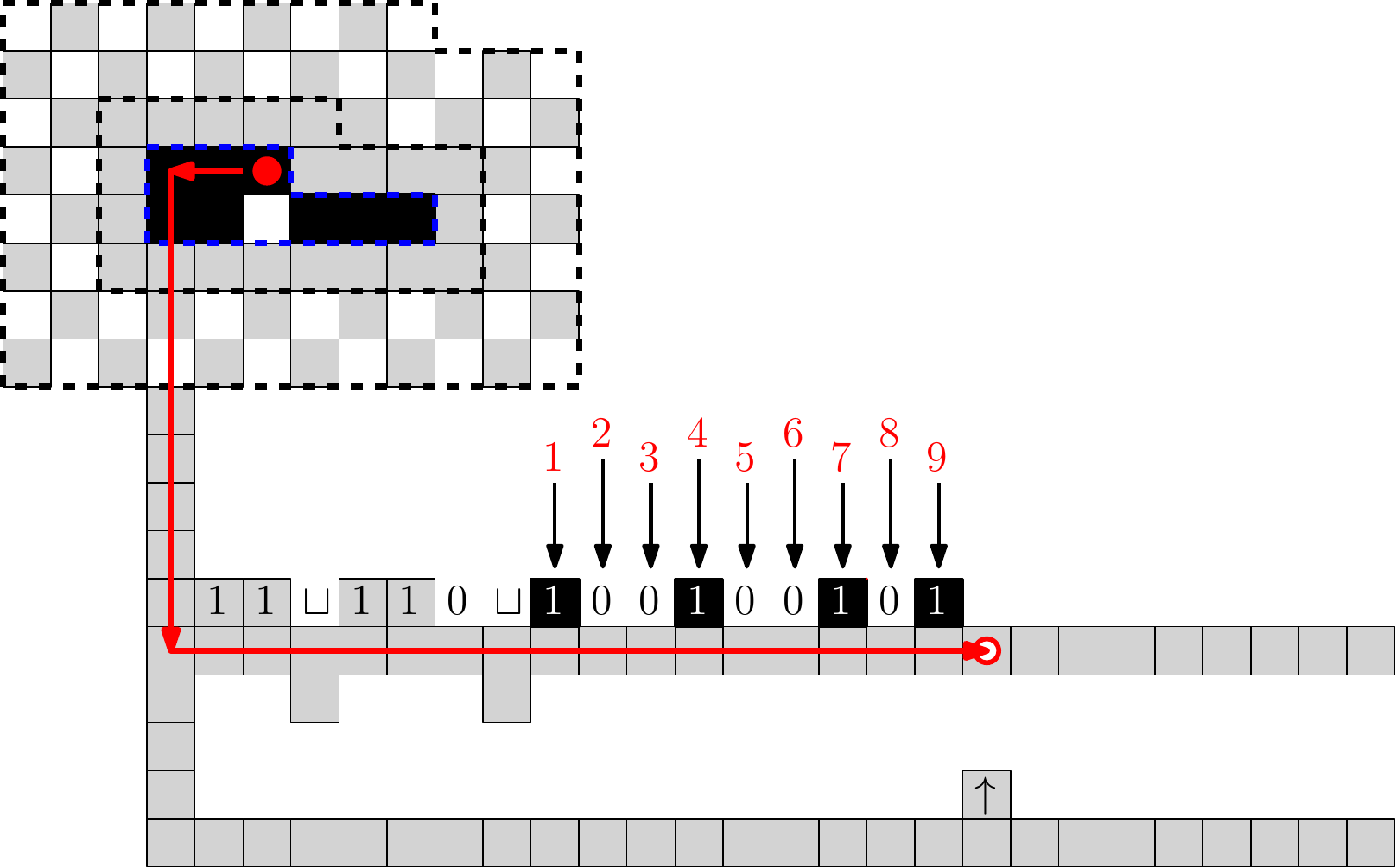}}\label{fig:TMb}
		\subfigure[Final state of Phase (2) after writing the binary representation of the input polyomino.]{\includegraphics[width=0.7\textwidth]{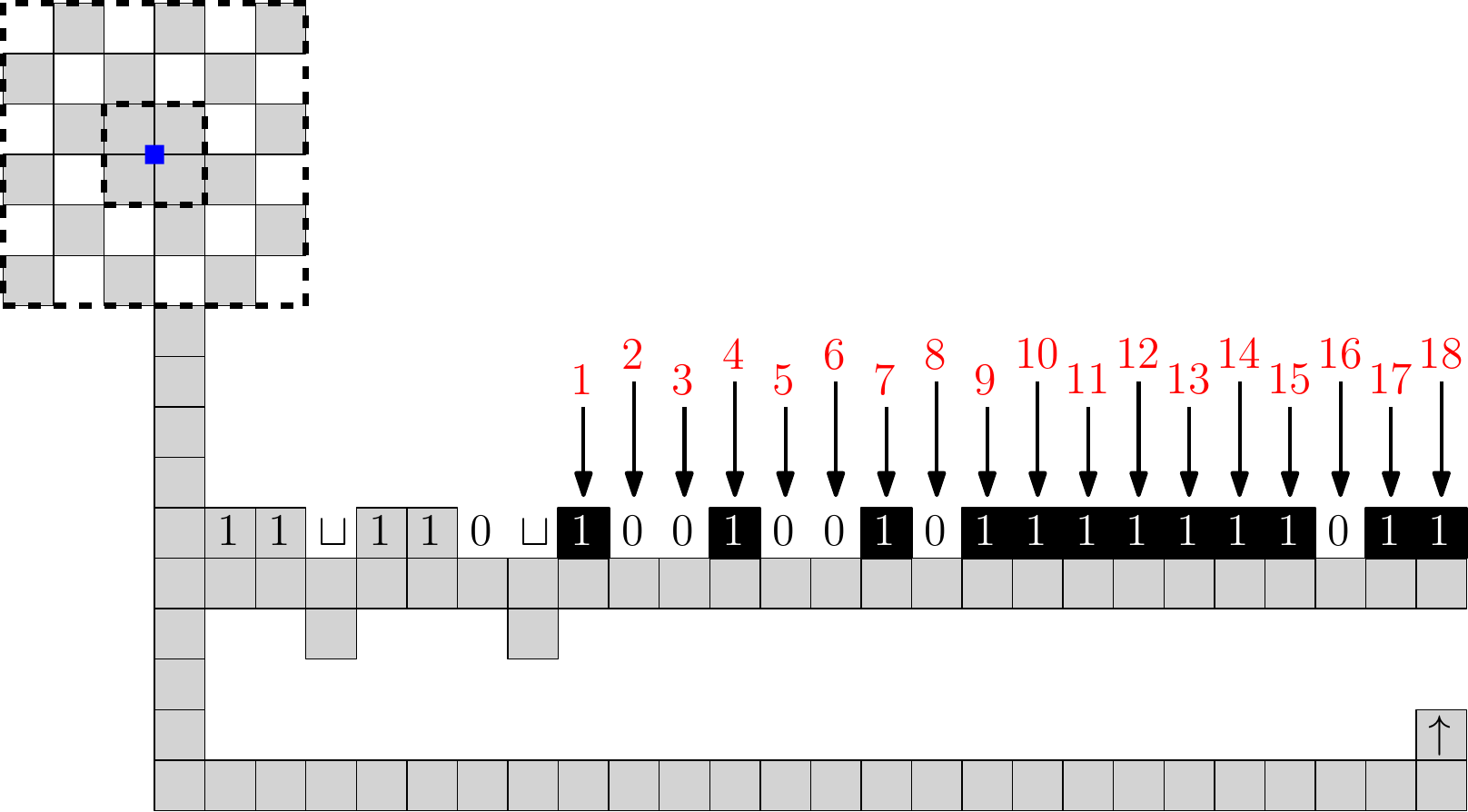}}\label{fig:TMc}
		\caption{Phase (2) of transforming a (black) polyomino by simulating a Turing Machine.} \label{fig:TM2}
	\end{figure}
	Our strategy works in five phases: Phase (1) constructs a slightly modified bounding box of $P_1$ (see Fig.~\ref{fig:TM}(a)). 
	Phase (2) constructs a shape representing $S(P_1)$. 
	In particular, the robot writes $S(P_1)$ onto a auxiliary line $\ell_2$ in a bit-by-bit fashion (see Fig~\ref{fig:TM}(b)). 
	In order to remember the position, to which the previous bit was written, we use a \emph{position tile} placed on another auxiliary line $\ell_3$.
	Phase (3) simulates the Turing machine that transforms $S(P_1)$ into $S(P_2)$. Phase (4) is the reversed version of Phase (2), and Phase (5) is the reversed version of Phase (1). As Phase~(4) and Phase (5) are the reversed version of Phase (1) and Phase (2), we still have to discuss how to realize Phases~(1),~(2), and~(3).
	
	\textbf{Phase (1):} We apply a slightly modified version of the approach of Theorem~\ref{thm:bounding_box}. 
	In particular, in addition to the bounding box, we fill all pixels that lie not inside $R$ and adjacent to the boundary of $R$ by auxiliary tiles. 
	This results in a third additional layer of the bounding box, see Fig.~\ref{fig:TM}(a). 
	Note that the robot recognizes that the tiles of the third layer do not belong to the input polyomino $P$, because these tiles of the third layer lie adjacent to the original bounding box. 
	
	\textbf{Phase (2):} Initially, we construct a vertical auxiliary line $\ell_1$ of constant length seeding two horizontal auxiliary lines $\ell_2$ and $\ell_3$ that are simultaneously constructed by the robot during Phase~(2). 
	The robot constructs the representation of $S(P_1)$ on $\ell_2$ and stores on $\ell_3$ a ``position tile'' indicating the next tile on $\ell_2$, where the robot has to write the next bit of the representation of~$S(P_1)$.
	
	Next we apply the approach of Theorem~\ref{th:count_tiles} twice in order to construct on $\ell_2$ one after another the binary representations of the line and column numbers of $R$, see Fig.~\ref{fig:TM}.
	
	Finally, the robot places one after another and in decreasing order w.r.t. $y$-coordinates all lines of $R$ onto $\ell_2$. In particular, the tiles of the current line $\ell$ of $R$ are processed from right to left as follows. For each each tile $t \in \ell$, the robot alternates between $\ell$ and the auxiliary lines $\ell_2$ and $\ell_3$ in order to check whether $t$ belongs to~$P_1$ or not.
	
	In order to reach the first tile of the topmost line, the robot moves in a vertical direction on the lane induced by the vertical line $\ell_1$ to the topmost line of $R$ and then to the rightmost tile of the topmost line, see the yellow arrows in Fig.~\ref{fig:TM}(b). 
	In order to ensure that in the next iteration of Phase (2), the next tile of $R$ is reached, the robot deletes $t$ from $R$ and shrinks  the modified bounding box correspondingly, see Fig.~\ref{fig:TM}(b). 
	This involves only a constant-sized neighbourhood of $t$ and thus can be realized by robot with constant memory. 
	Next, the robot moves to the first free position of auxiliary line $\ell_2$ by searching for the position tile on auxiliary line~$\ell_3$. As the vertical distances of $\ell_2$ and $\ell_3$ to the lowest line of the bounding box is a constant, the robot can change vertically between $\ell_2$ and $\ell_3$ by simply remembering the robot's vertical distance to the lowest line of the bounding box. 
	The robot concludes the current iteration of Phase (2) by moving the position tile one step to the right.
	
	The approach of Phase (2) is repeated until all tiles of $P_1$ are removed from $R$.
	
	\textbf{Phase (3):} We simply apply the algorithm of the Turing machine by moving the robot correspondingly to the movement of the head of the Turing machine.
	
	Finally, we analyze the entire approach of simulating a Turing machine as described above. 
	From the discussion of Phase (2) we conclude that the construction of the auxiliary lines $\ell_1$, $\ell_2$, and $\ell_3$ are not necessary, because the vertical length of the entire construction below the bounding is a constant. 
	Furthermore, the current writing position on $\ell_2$ is indicated by a single position tile on $\ell_2$. 
	Thus, sizes of the bounding boxes needed in Phase (2) and Phase (4) dominate the number of used auxiliary tiles, which is in $\mathcal{O}(\partial P_1 + \partial P_2)$.
	Furthermore, each iteration of Phase (2) needs $\mathcal{O}(N^4 + T_{TM})$ steps, where $T_{TM}$ is the number of steps needed by the Turing machine.
	The same argument applies to the number of steps used in Phase (4).
	Finally, the additional space is in~$\Theta (N^2)$. 
	This concludes the proof of Theorem~\ref{thm:TM_strategy}.
\end{proof}

\section{CAD Functions}
As already seen, we can transform a given polyomino by any computable function by simulating a Turing machine.
However, this requires a considerable amount of space.
In this section, we present realizations of common functions in a more space-efficient manner.

\subsection{Copying a Polyomino}
Copying a polyomino $P$ has many application.
For example, we can apply algorithms that may destroy $P$ after copying $P$ into free space.
In the following, we describe the \emph{copy} function that copies each column below the bounding box, as seen in~Fig.~\ref{fig:copy_bridge} (a row-wise copy can be described analogously).

Our strategy to copy a polyomino is as follows:
After constructing the bounding box, we copy each column of the bounding box area into free space.
This, however, comes with a problem.
As the intersection of the polyomino with a column may consists of more than one component, which may be several units apart, we have to be sure that the right amount of empty pixels are copied.
This problem can be solved by using \textit{bridges}, i.e., auxiliary tiles denoting empty pixels. 
To distinguish between tiles of $P$ and bridges, we use two lanes (i) a left lane containing tiles of $P$ and (ii) a right line containing bridge tiles (see Fig.~\ref{fig:copy_bridge}).

To copy an empty pixel, we place a tile two unit steps to the left. This marks a part of a bridge. 
Then we move down and search for the first position, at which there is no copied tile of $P$ to the left, nor a bridge tile.
When this position is found, we place a tile, denoting an empty position.

To copy a tile $t$ of $P$, we move this tile three unit steps to the left.
Afterwards, we move downwards following the tiles and bridges until we reach a free position and place a tile, denoting a copy of $t$.
when we reach the bottom of a column, we remove any tile representing a bridge tile and proceed with the next column (see Fig.~\ref{fig:copy_bridge} right).

\begin{figure}[t]
	\centering
	\includegraphics[angle=90, width=.9\columnwidth]{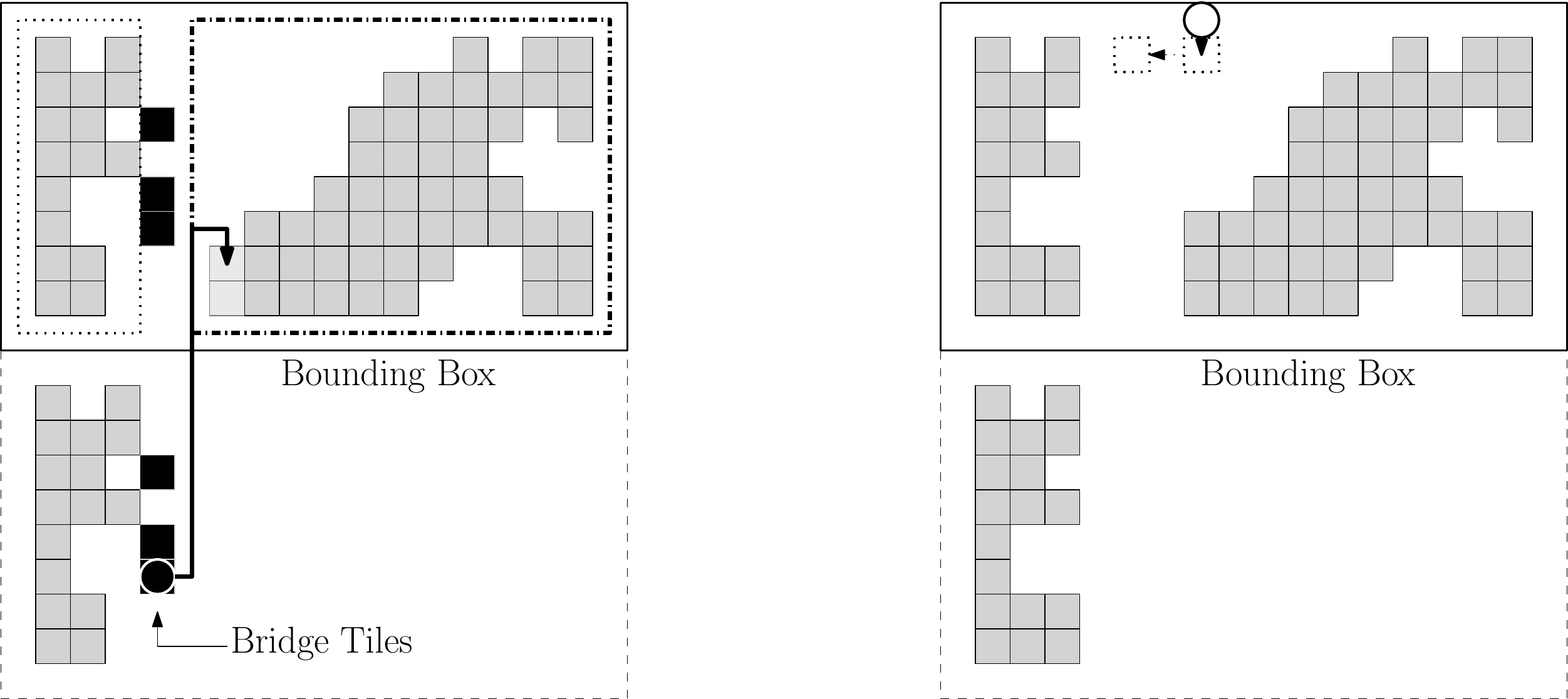}
	\caption{Left: Intermediate step while copying a column of a polyomino (gray tiles) with bridges (black tiles). Tiles in the left box (dotted) are already copied, the tile of $P$ in the right box (dash-dotted) are not copied yet. The robot ($\bigcirc$) moves to next pixel. Right: When the column is copied the bridges get removed and the robot proceeds with the next column.}
	\label{fig:copy_bridge}
\end{figure}

\begin{theorem}\label{th:copy}
	Copying a polyomino $P$ column-wise can be done within $\OCal(wh^2)$ unit steps using $\OCal(N)$ of auxiliary tiles and $\OCal(h)$ additional space.
\end{theorem}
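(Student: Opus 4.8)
Since the copy procedure is already described in full above, my plan is to treat the statement as an assertion about a fixed algorithm and prove two things: (a) that the column-wise copy is faithful, and (b) that it meets the three claimed bounds. For correctness I would fix a single column and show that the robot reproduces it exactly in the free space below the bounding box. The essential observation is that the intersection of $P$ with a column decomposes into maximal runs of tiles separated by gaps of empty pixels, and a correct copy must preserve the lengths of these gaps. I would maintain the following invariant while the robot scans the column top-to-bottom: after the first $j$ pixels of the current column have been processed, the left (copy) lane holds exactly the copied $P$-tiles seen so far and the right (bridge) lane holds exactly one bridge tile per empty pixel seen so far, so that the combined height of copy-plus-bridges equals $j$. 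Placing a bridge for every empty pixel and a left-lane tile for every occupied pixel then guarantees that the vertical offset of each copied tile inside the column matches its offset in $P$, which is precisely the information the gaps encode.

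A second thing to verify is that all of this is realizable with the robot's constant memory. When transferring a pixel, the robot descends from the bounding box into the copy and must locate the \emph{first free position}, i.e. the lowest cell carrying neither a copied $P$-tile (left lane) nor a bridge tile (right lane). I would argue that the two-lane layout makes this a purely local test: a left-lane tile, a right-lane bridge tile, and genuinely empty space each present a distinct constant-size neighbourhood pattern, so the robot recognizes ``free'' without any counting. The upward return to the next unprocessed pixel of the column is handled symmetrically by following the lanes up until re-entering the bounding-box region, and emptying the bridge lane at the end of each column restores a clean state for the next one.

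For the complexity bounds I would argue as follows. There are $w$ columns, each with at most $h$ pixels. Processing one pixel—whether a tile shifted left into the copy or an empty pixel marked by a bridge—costs one descent to the current bottom of the copy and one ascent; since the copy of a single column has height $\OCal(h)$, each such round trip is $\OCal(h)$ steps. Hence one column costs $\OCal(h^2)$ and the total is $\OCal(wh^2)$. For the tile count, the finished copy contains exactly $N$ tiles, and the only other auxiliary tiles are the bridges, of which at most $\OCal(h)$ coexist while one column is processed and which are deleted before the next column. Using that edge-connectivity of $P$ forces at least one tile in each of its $h$ rows, so $h \le N$, the bridge overhead is absorbed into $\OCal(N)$. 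Finally, apart from the produced copy itself the robot occupies only the bridge lane, which holds $\OCal(h)$ tiles at a time, giving the claimed $\OCal(h)$ additional working space.

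The main obstacle is the correctness-with-constant-memory argument rather than the arithmetic. I must ensure that the robot never confuses the two lanes, that the ``first free position'' is always well defined and \emph{locally} recognizable even when a column contributes several separated runs, and that clearing the bridge lane leaves no residue affecting subsequent columns. Once this local recognizability is established, the invariant gives correctness and the time, tile, and space counts follow immediately from the per-pixel $\OCal(h)$ round-trip bound.
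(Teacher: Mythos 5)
Your proposal is correct and follows essentially the same approach as the paper's proof: the same two-lane bridge algorithm, the same per-pixel $\OCal(h)$ round-trip accounting giving $\OCal(wh^2)$ steps, the same tile count ($N$ for the copy plus $\OCal(h)$ reusable bridges), and the same space conclusion. Your added detail on the invariant, the local recognizability of the first free position, and the observation $h \le N$ only makes explicit what the paper dismisses as ``straightforward to see.''
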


\begin{proof}
	Consider the strategy described above.
	It is straightforward to see that by maintaining the bridges, we can ensure that each tile of $P$ is copied to the correct position.
	As there are $\OCal(wh)$ many pixels that we have to copy with cost of $\OCal(h)$ per pixel, the strategy needs $\OCal(wh^2)$ unit steps to terminate.
	
	Now consider the number of auxiliary tiles.
	We need $N$ tiles for the copy and $\OCal(h)$ tiles for bridges, which are reused for each column.
	Thus, we need $\OCal(N)$ auxiliary tiles in total.
	Because we place the copied version of $P$ beneath $P$, we need $\OCal(h)$ additional space in the vertical direction.
\end{proof}

\subsection{Reflecting a Polyomino}
In this section we show how to perform a vertical reflection on a polyomino $P$ (a horizontal reflection is done analogously).
Assume that we already built the bounding box.
Then we shift the bottom side of the bounding box one unit down, such that we have two units of space between the bounding box and $P$. 
We start with the bottom-most row $r$ and copy $r$ in reversed order beneath the bounding box using bridges, as seen in the previous section.
After the copy process, we delete $r$ from $P$ and shift the bottom side of the bounding box one unit up.
Note that we still have two units of space between the bounding box and $P$.
We can therefore repeat the process until no tile is left.

\begin{theorem}
	Reflecting a polyomino $P$ vertically can be done in $\OCal(w^2h)$ unit steps, using $\OCal(w)$ of additional space and $\OCal(w)$ auxiliary tiles.
\end{theorem}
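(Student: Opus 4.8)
The plan is to verify the three claimed bounds---running time, additional space, and auxiliary tiles---for the row-by-row strategy described above, treating correctness of the per-row copy as inherited from the bridge-based copying procedure of Theorem~\ref{th:copy}.

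First I would argue correctness. The reflection is assembled one row at a time: the current bottom row $r$ of $P$ is copied in reversed order into the free strip immediately below the (shifted) bottom side of the bounding box, after which $r$ is deleted and the bottom side is shifted one unit up. The crucial invariant is that there are always exactly two units of free space between the bounding box and the remaining part of $P$, so that the copy strip sits at constant vertical distance from the row currently being processed, and that the copies produced in successive iterations stack in the correct vertical order as the bounding box rises. Because we process the rows from the bottom upward while the copies rise to fill the space vacated by the deleted rows, the copied rows reassemble into the reflected polyomino. Within each row, correctness of the reversed placement---in particular the faithful reproduction of the empty pixels (gaps) that may separate the connected components of the row---follows from the bridge mechanism of the previous section, which guarantees that every tile and every empty pixel is reproduced at its correct (here mirrored) horizontal position.

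For the time bound I would analyze a single iteration. Reversing and copying one row of width $w$ touches $\OCal(w)$ pixels; for each pixel the robot travels $\OCal(w)$ steps horizontally (to reach the mirrored target position and to thread the bridges) and only $\OCal(1)$ steps vertically, since the gap and the bounding box have constant thickness by the invariant above. Hence one row costs $\OCal(w^2)$ steps, and shifting the bottom side of the bounding box by one unit costs a further $\OCal(w)$ steps. Summing over the $h$ rows yields $\OCal(w^2 h + wh) = \OCal(w^2 h)$ steps.

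Finally I would bound space and tiles. For additional space, the key observation is that the working strip always lies just below the current bottom of the bounding box, which rises by one unit per iteration exactly as $P$ shrinks from below; thus only a constant number of rows ever lie below the original bounding box, giving $\OCal(w)$ additional pixels. For auxiliary tiles, note that the procedure \emph{moves} rather than duplicates: each tile placed into the reflected image is matched by a deleted tile of $P$, so the number of polyomino tiles is unchanged apart from the $\OCal(w)$ tiles of the row currently in transit and the $\OCal(w)$ bridge tiles, all of which are reused across iterations; hence at most $\OCal(w)$ auxiliary tiles are present at any time. The main obstacle I anticipate is the correctness of the reversed single-row copy together with the maintenance of the two-unit-gap invariant under the interleaved deletion and bounding-box shifting: one must check that the rising copy strip never collides with the remaining part of $P$ nor with previously placed copies, which is precisely what simultaneously secures the $\OCal(w)$ space bound.
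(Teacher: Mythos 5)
Your proposal is correct and follows essentially the same route as the paper's proof: the same bottom-up row-by-row strategy with bridge-based reversed copying, the same $\OCal(w)$-per-pixel travel bound summed over $\OCal(wh)$ pixels for the $\OCal(w^2h)$ time bound, the same observation that the copy overwrites the space vacated by $P$ (so only $\OCal(w)$ additional space), and the same reuse-of-bridges argument for $\OCal(w)$ auxiliary tiles. If anything, your explicit maintenance of the two-unit-gap invariant and the collision check for the rising copy strip is more careful than the paper's own brief argument.
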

\begin{proof}
	For each pixel within the bounding box we have to copy this pixel to the desired position.
	This costs $\OCal(w)$ steps, i.e., we have to move to the right side of the boundary, move a constant number of steps down and move $\OCal(w)$ to the desired position.
	These are $\OCal(w)$ steps per pixel,
	$\OCal(w^2h)$ unit steps in total.
	
	It can be seen in the described strategy that we only need a constant amount of additional space in one dimension because we are overwriting the space that was occupied by $P$. 
	This implies a space complexity of $\OCal(w)$.
	Following the same argumentation of Theorem~\ref{th:copy}, we can see that we need $\OCal(w)$ auxiliary tiles.
\end{proof}
\begin{corollary}
	Reflecting a polyomino $P$ vertically and horizontally can be done in $\OCal(wh(w+h))$ unit steps, using $O(w+h)$ additional space and $O(w)$ auxiliary tiles.
\end{corollary}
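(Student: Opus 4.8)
The plan is to realise the combined reflection as a \emph{composition} of the two single-axis reflections already available: first apply the vertical reflection of the preceding theorem, and then apply its horizontal analogue (the roles of $w$ and $h$ exchanged). Reflecting across a horizontal axis followed by a reflection across a vertical axis is exactly a point reflection, and the two operations commute, so the order is immaterial for correctness; I would fix whichever order is convenient for bookkeeping. The concrete steps are: run the vertical-reflection routine on $P$, which by the preceding theorem terminates, leaves a vertically reflected copy of $P$ inside a constant shift of the original bounding box (it overwrites the region vacated by $P$ row by row) and removes its bridges; then detect termination and re-initialise on this intermediate shape; then run the horizontal-reflection routine (the $w\leftrightarrow h$ analogue) on it, reflecting it left--right in place. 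A key observation that keeps the parameters clean is that a reflection preserves the width and height of the bounding box, so the second phase operates on a polyomino of the same dimensions $w,h$ as the first.

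For the bounds, the two phases run sequentially, so their step counts add: $\OCal(w^2h)$ for the vertical phase and $\OCal(wh^2)$ for the horizontal phase, giving $\OCal(w^2h+wh^2)=\OCal(wh(w+h))$ in total. For the space, the crucial point is that the two phases expand into \emph{orthogonal} strips outside the input box: the vertical phase visits an $\OCal(w)$-size strip below it and the horizontal phase an $\OCal(h)$-size strip beside it, so the total number of pixels ever visited outside the input bounding box is $\OCal(w)+\OCal(h)=\OCal(w+h)$. Finally, the auxiliary tiles are reused between phases, since the bridges and the buffered row (respectively column) of one phase are cleaned up before the next: the vertical phase ever holds only $\OCal(w)$ extra tiles and the horizontal phase only $\OCal(h)$, so the peak tile surplus is $\OCal(\max(w,h))$, which matches the claimed auxiliary-tile bound.

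The main obstacle, and the one point I would verify in detail, is the hand-off between the two phases. Because the robot is a finite automaton with no persistent memory, it must (i) recognise from purely local structure that the first reflection has finished, (ii) guarantee that the intermediate object is again a genuine polyomino carrying exactly the bounding-box markers that the second routine expects as input, and (iii) ensure that every bridge and buffer tile of the first phase has been deleted before the second phase begins, so that the tile surplus does not accumulate across phases. Each of these follows from the self-terminating, in-place, clean-up-as-you-go character of the single-axis routine from the preceding theorem; the only thing that genuinely needs checking is that this routine restores a clean, correctly-marked bounding box upon termination, so that its own output is a valid input for a second invocation.
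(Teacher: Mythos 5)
Your proposal is correct and is precisely the paper's intended argument: the corollary appears in the paper without a separate proof because it follows by composing the vertical-reflection routine of the preceding theorem with its horizontal ($w \leftrightarrow h$) analogue and summing the resulting time, space, and tile costs, exactly as you do, including the observation that the two phases reuse tiles and expand into orthogonal constant-width strips. The one discrepancy is your remark that the peak tile surplus $\OCal(\max(w,h))$ ``matches the claimed auxiliary-tile bound'': the corollary as stated claims $O(w)$ auxiliary tiles, which your (honest) analysis does not yield when $h > w$, since the horizontal phase genuinely needs $\OCal(h)$ bridge tiles per column; this appears to be an inaccuracy in the paper's own statement rather than in your proof, as Table~\ref{tab:results} lists the reflect operation with $\OCal(\max(w,h))$ tiles.
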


\subsection{Rotating a Polyomino}
Rotating a polyomino presents some additional difficulties, because the dimension of the resulting polyomino has dimensions $h\times w$ instead of $w\times h$.
Thus, we may need non-constant additional space in one dimension, e.g., if one dimension is large compared to the other dimension.
A simple approach is to copy the rows of $P$ bottom-up to the right of $P$.
This allows us to rotate $P$ with $\OCal(wh)$ additional space.
For now, we assume that $h\geq w$.

We now propose a strategy that is more compact.
The strategy consists of two phases: First build a reflected version of our desired polyomino, then reflect it to obtain the correct polyomino.

After constructing the bounding box, we place a tile $t_1$ in the bottom-left corner of the bounding box, which marks the bottom-left corner of $P$ (see Fig.~\ref{fig:rotate_init}).
We also extend the bounding box at the left side and the bottom side by six units.
This gives us a width of seven units between the polyomino and the bounding box at the left and bottom side.
Now we can move the first column rotated in clockwise direction five units below $P$ (which is two units above the bounding~box), as follows.

We use two more tiles $t_c$ denoting the column, at which we have to enter the bounding box, and $t_2$ marking the pixel that has to be copied next (see Fig.~\ref{fig:rotate_init}).
We can maintain these tiles as in a copy process to always copy the next pixel:
If $t_2$ reached $t_1$, i.e., we would place $t_2$ on $t_1$, then we know that we have finished the column and proceed with the row by moving $t_2$ to the right of $t_1$.
We can copy this pixel to the desired position again by following the bridges and tiles that now make a turn at some point (see Fig.~\ref{fig:rotate_init} for an example).
Note that we cannot place the row directly on top of the column or else we may get conflicts with bridges. We therefore build the row one unit to the left and above the column.
Also note that during construction of the first column or during the shifting we may hit the bounding box. 
In this case we extend the bounding box by one unit.

After constructing a column and a row, we move the constructed row one unit to the left and two units down, and we move the column one unit down and two units left.
This gives us enough space to construct the next column and row in the same way.

When all columns and rows are constructed, we obtain a polyomino that is a reflected version of our desired polyomino. 
It is left to reflect horizontally to obtain a polyomino rotated in counter-clockwise direction, or vertically to obtain a polyomino that is rotated in clockwise direction.
This can be done with the strategy described above.
\begin{figure}
	\centering\hfill
	\includegraphics[width=0.3\columnwidth]{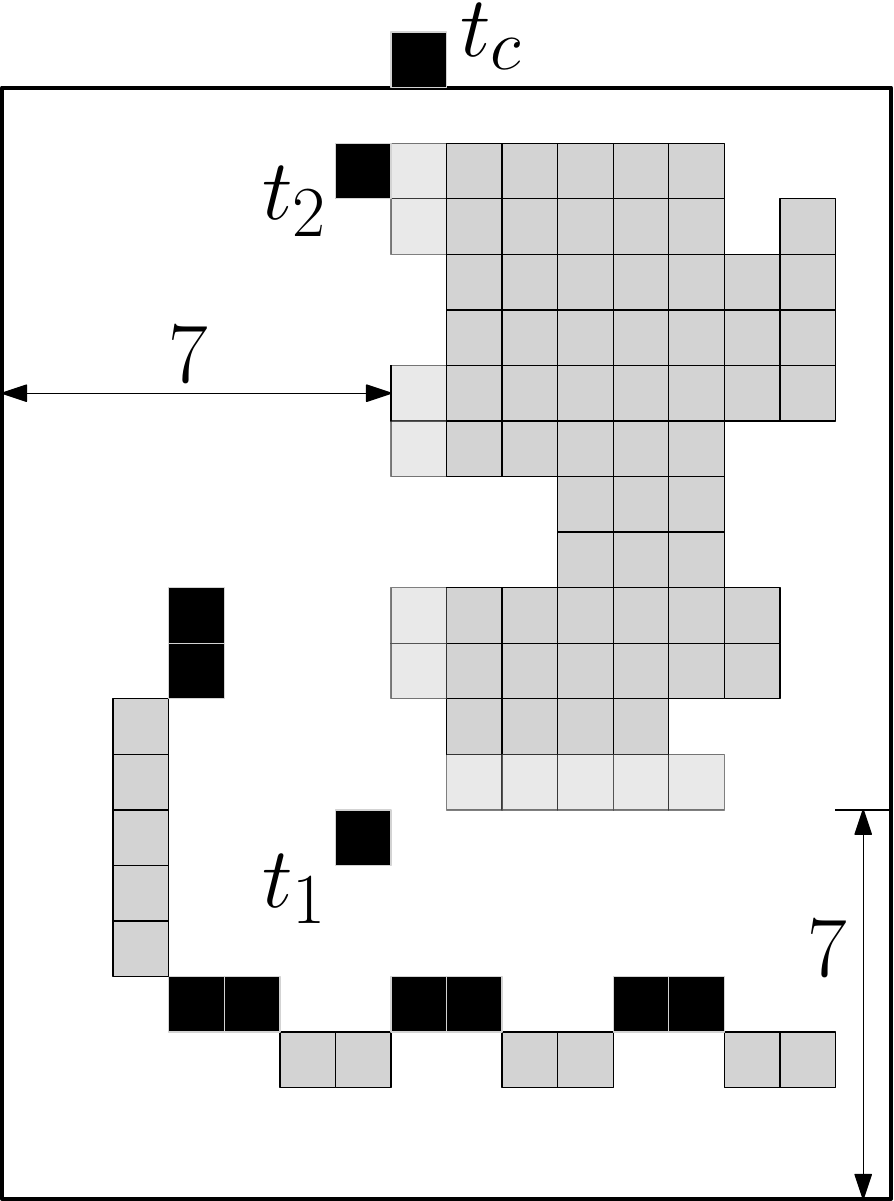}\hfill
	\includegraphics[width=0.3\columnwidth]{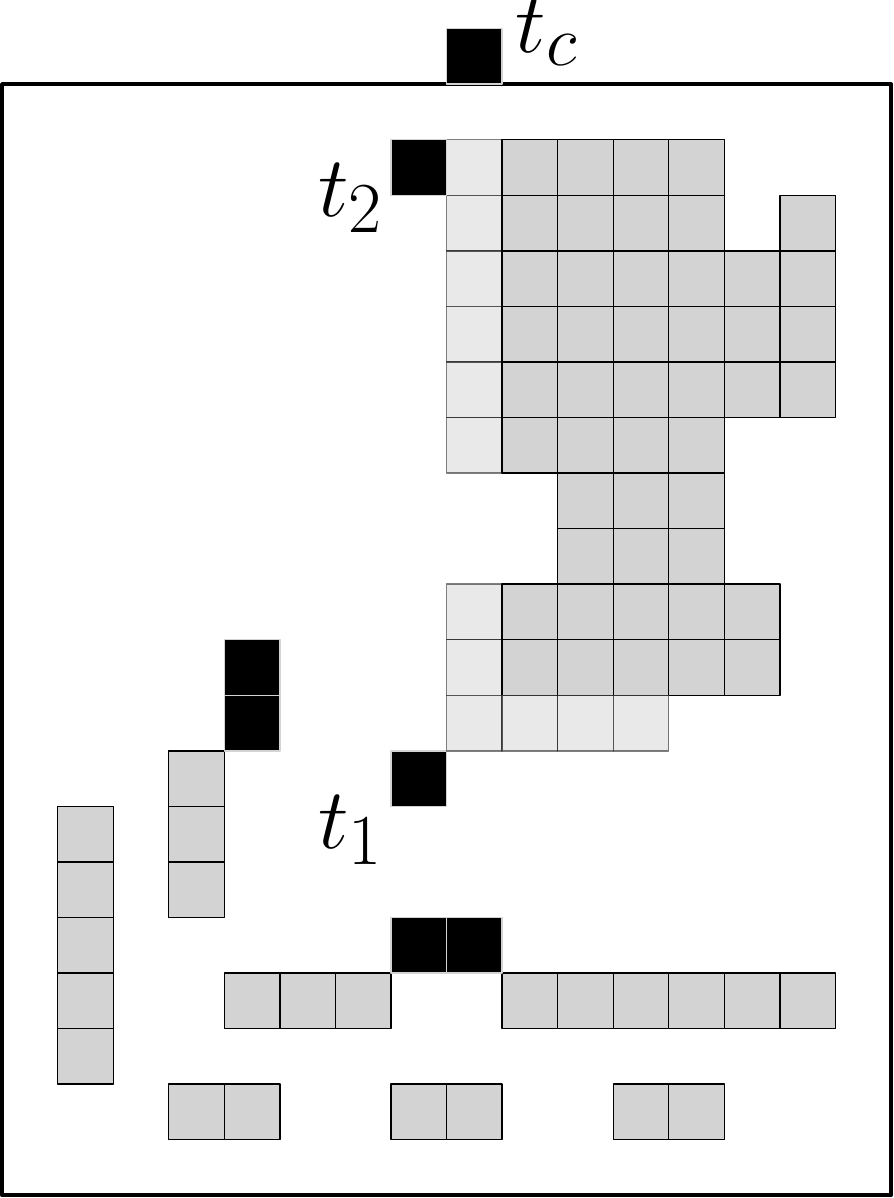}\hfill
	\includegraphics[width=0.3\columnwidth]{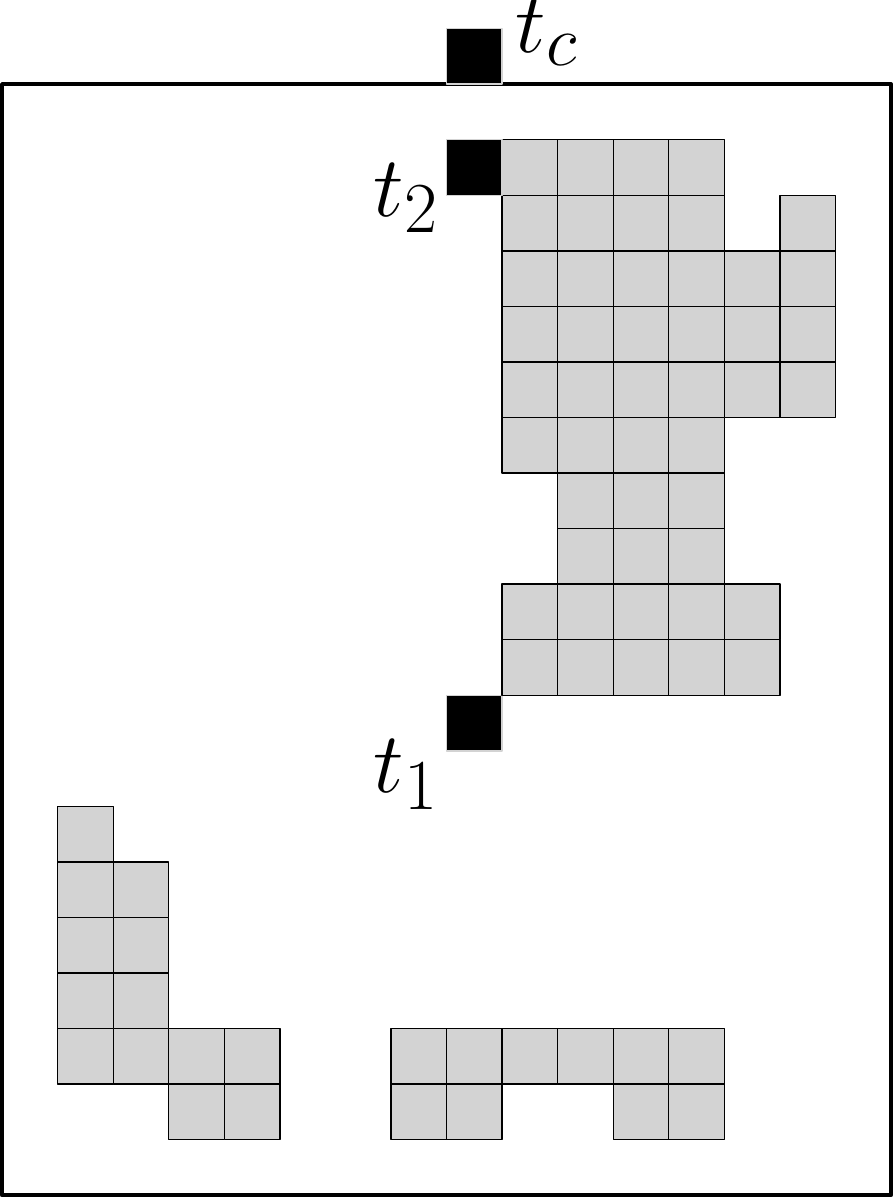}\hfill\phantom{}
	\caption{Left: Constructing the first column and row (light gray in the polyomino). Middle: First row and column have been moved away by a constant number of steps to make space for the next row and column. Right: Merging the first two columns and rows.}
	\label{fig:rotate_init}
	\vspace{-0.4cm}
\end{figure}

\begin{theorem}
	There is a strategy to rotate a polyomino $P$ by $90^\circ$ within $\OCal((w+h)wh)$ unit steps, using $O(|w-h|h)$ of additional space and $O(w+h)$ auxiliary tiles.
\end{theorem}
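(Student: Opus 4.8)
The plan is to establish the three bounds --- running time, additional space, and auxiliary tiles --- separately, following the two-phase structure of the construction (build a reflected rotation, then reflect it to the desired orientation) described above. Throughout I assume the bounding box is already available via Theorem~\ref{thm:bounding_box} and that $h \geq w$, so the rotated polyomino has width $h$ and height $w$.

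\textbf{Running time.} The dominant cost is the first phase, in which each pixel within the $w \times h$ bounding box is copied, one at a time, to its rotated position. To copy a single pixel the robot follows the chain of already-placed tiles and bridges to locate the next free slot, travelling a distance proportional to the current length of the partially built column or row, which is $\OCal(w+h)$. Since there are $\OCal(wh)$ pixels to process, the copying contributes $\OCal((w+h)wh)$ steps. Between successive columns and rows the robot shifts the constructed pieces by a constant number of units; each such shift moves $\OCal(w+h)$ tiles and is performed $\OCal(w)$ times, so the shifting is absorbed into the same bound. The second phase is a single reflection of an $h \times w$ polyomino, which costs $\OCal((w+h)wh)$ by the reflection theorem and its corollary. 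Hence the total is $\OCal((w+h)wh)$ steps.

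\textbf{Auxiliary tiles.} At any moment the auxiliary tiles consist of the (shrinking/growing) bounding box, the constant number of marker tiles $t_1$, $t_c$, $t_2$, the bridge tiles, and the partially constructed output. The bounding box and the output together account for $\OCal(w+h)$ tiles beyond $P$ itself at any single time, matching the reflection analysis, and the bridges for one column are $\OCal(w+h)$ as well and are reused. This yields $\OCal(w+h)$ auxiliary tiles.

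\textbf{Additional space.} Here is where I expect the main obstacle, and where the $\OCal(|w-h|h)$ bound must be argued carefully rather than loosely. Because the rotated polyomino has dimensions $h \times w$ while the original occupies $w \times h$, the output cannot be fully overlaid on the footprint of $P$ the way reflection reuses its own space; the excess extent in the longer dimension is exactly $|w-h|$, and this excess strip, of the other dimension $\OCal(h)$, is what spills outside the original bounding box. I would make this precise by tracking, over the course of phase one, the maximal extent of the construction region outside the input bounding box and showing it never exceeds $\OCal(|w-h|)$ in one direction while staying $\OCal(h)$ in the other, giving $\OCal(|w-h|h)$. The delicate point is that the staged shifting (moving each freshly built column and row inward by a constant offset to make room for the next) is precisely the mechanism that keeps the overall footprint compact rather than growing to the naive $\OCal(wh)$; I would argue that these shifts guarantee the construction stays within a band of width $\OCal(|w-h|)$ beyond $P$, so the additional space never blows up. Combining the three parts completes the proof.
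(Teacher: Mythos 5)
Your proposal is correct and takes essentially the same route as the paper's own proof: the same $\OCal(w+h)$ per-pixel copy cost summed over $\OCal(wh)$ pixels with the shifting and bounding-box adjustments absorbed into that bound, the same $\OCal(w+h)$ accounting of bounding box, markers, and bridges for auxiliary tiles, and the same observation that the working footprint grows by $\OCal(|w-h|)$ in only one dimension for the $\OCal(|w-h|h)$ space bound. Your space discussion is, if anything, slightly more explicit than the paper's one-line justification, which simply asserts that the width of the working space increases by $\OCal(|w-h|)$ while the height increases by $\OCal(1)$.
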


\begin{proof}
	Like in other algorithms, the number of steps to copy the state of a pixel to the desired position is bounded by $\OCal(w+h)$.
	Shifting the constructed row and column also takes the same number of steps.
	Therefore, constructing the reflected version of our desired polyomino needs $\OCal((w+h)wh)$ unit steps.
	Additionally we may have to extend one side of the bounding box. 
	This can happen at most $\OCal(|w-h|)$ times, with each event needing $\OCal(h)$ unit steps.
	Because $\OCal(|w-h|)$ can be bounded by $\OCal(w+h)$, this does not change the total time complexity.
	
	Because the width of the working space increases by $\OCal(|w-h|)$ and the height only increases by $\OCal(1)$, we need $\OCal(|w-h|h)$ of additional space.
	It is straightforward to see that we need a total of $\OCal(\max(w+h))$ auxiliary tiles.
\end{proof}
\subsection{Scaling a Polyomino}
Scaling a polyomino $P$ by a factor $c$ replaces each tile by a $c\times c$ square.
This can easily be handled by our robot.

\begin{theorem}
	Given a constant $c$, the robot can scale the polyomino by $c$ within $\OCal((w^2+h^2)c^2 N)$ unit steps using $\OCal(c^2wh)$ additional tiles and $\OCal(c^2N)$ additional space.
\end{theorem}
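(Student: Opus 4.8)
The plan is to reduce the scaling operation to the column-wise copy routine of Theorem~\ref{th:copy}, modified so that each source pixel is replicated into a $c \times c$ block rather than copied once. The crucial observation is that scaling by $c$ acts \emph{uniformly}: every pixel of the bounding-box region of $P$ --- whether it carries a tile or is empty --- maps to exactly $c$ consecutive output pixels in the vertical direction and is reproduced in $c$ consecutive output columns horizontally. Because $c$ is a fixed constant, the robot can ``count to $c$'' purely inside its finite state space, so placing a run of $c$ tiles (for a source tile) or accounting for a run of $c$ empty pixels (for a source empty) requires no external storage. This uniformity is what lets us reuse the bridge machinery essentially verbatim, only stretched by the factor $c$.

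Concretely, I would first build the bounding box via Theorem~\ref{thm:bounding_box} and open up space below it (of vertical extent $\OCal(ch)$ and horizontal extent $\OCal(cw)$) for the scaled copy. Then I process the columns of $P$ from left to right, exactly as in the copy routine. For a fixed source column I scan it from the row just below the bounding box downward; for each source pixel I emit a vertical run of length $c$ in the output --- $c$ stacked tiles if the source pixel is occupied, or $c$ bridge tiles marking empty positions if it is not --- and I repeat the whole (stretched) column $c$ times side by side to realize the $c \times c$ blocks. As in Theorem~\ref{th:copy}, the bridges serve to preserve the correct (now $c$-fold) vertical gaps between the several components into which a single column of $P$ may decompose, so that the robot can always descend along tiles and bridges to locate the next free output position without having to remember any coordinate. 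Once a (scaled) column is complete, its bridge tiles are removed and the robot advances to the next source column.

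For the resource bounds, the scaled shape contains $c^2 N$ tiles and the bridges reused per column add only $\OCal(ch)$ at any moment, which together with the $\OCal(c^2 wh)$ pixels touched in the working region gives the stated tile complexity; the working region placed below and beside $P$ accounts for the $\OCal(c^2 N)$ additional space. The running time follows the same accounting as in Theorem~\ref{th:copy}, but with every traversal lengthened by the factor $c$ and every source pixel expanded into $\OCal(c^2)$ output tiles: charging each placed output tile the cost of descending along its column of tiles and bridges and of the round trip across a working region of extent $\OCal(c(w+h))$, a careful count over all $\OCal(c^2 N)$ placed tiles bounds the total by $\OCal((w^2+h^2)c^2 N)$ unit steps.

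The step I expect to be the main obstacle is the bridge bookkeeping under scaling with only constant memory: I must guarantee that each source pixel expands into exactly $c$ output positions vertically and that the inter-component gaps are stretched by precisely $c$, all while the robot can only inspect a constant-sized neighborhood and count up to the constant $c$. Verifying that the descent-along-bridges procedure still lands on the correct free position after this uniform stretching --- in particular that occupied runs and empty runs interleave correctly across all $c$ horizontal copies --- is the delicate part; the remaining pieces are direct adaptations of the copy analysis.
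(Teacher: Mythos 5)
Your proposal takes a genuinely different route from the paper: the paper scales \emph{in place}, processing columns one at a time by shifting everything to the right of the current column $C_i$ by $c$ units, duplicating $C_i$ exactly $c-1$ times into the freed gap, and then repeating the same procedure row-wise; no bridges are needed there because every gap ever created has constant width $c$, which the finite-state robot can cross from memory. Your reduction to the bridge-based copy routine of Theorem~\ref{th:copy} has a genuine algorithmic gap that the in-place strategy is specifically designed to avoid: in Theorem~\ref{th:copy}, the robot finds the output column by moving \emph{straight down} from the source column, since the copy of column $i$ sits directly below column $i$. Under scaling this alignment is destroyed --- the output columns of source column $i$ begin at horizontal offset $(c-1)(i-1)$ from it, a drift that grows with $i$ and that a finite automaton can neither compute nor store. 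So the navigation does \emph{not} carry over ``exactly as in the copy routine''; you would need extra machinery (e.g., marker tiles as in the paper's rotation proof, or an output structure kept permanently connected and navigable) to locate the next free output column. This, rather than the vertical run-length bookkeeping you flag as the delicate step (which is indeed harmless since $c$ is a constant), is where the argument breaks.

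There is also a quantitative gap in the resource accounting. The bridge discipline forces the robot to visit and mark \emph{every} pixel of the $\Theta(cw)\times\Theta(ch)$ output region, not just the positions of tiles; since the paper defines space complexity as the number of visited pixels outside the input's bounding box, your approach uses $\Theta(c^2wh)$ additional space, which exceeds the claimed $\OCal(c^2N)$ by a factor of up to $N$ (take a plus-shaped polyomino with $w=h$, where $N=\Theta(w+h)$ but $wh=\Theta(N^2)$). For the same reason your time analysis, which charges round trips only to the $\OCal(c^2N)$ placed tiles, undercounts: the $\Theta(c^2wh)$ bridge placements each cost a descent of length $\OCal(c(w+h))$ as well, giving $\OCal(c^3wh(w+h))$ steps in total; this still lies in $\OCal\bigl(c^3(w^2+h^2)N\bigr)$ because $N\geq\max(w,h)$ for a connected polyomino, so the time bound survives for constant $c$, but only after an argument you do not make. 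The paper's in-place approach sidesteps both issues because its work and its visited pixels are proportional to the number of tiles rather than to the area of the bounding box: empty pixels never need to be represented, since the only gaps ever created have the constant width $c$.
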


\begin{proof}
	After constructing the bounding box, we place a tile denoting the current column.
	Suppose we are in the $i$-th column $C_i$.
	Then we shift all columns that lie to the right of $C_i$ by $c$ units to the right with cost of $\OCal((w-i)\cdot \sum_{j=i+1}^h cN_{C_j})\subseteq \OCal(wcN)$, where $N_{C_j}$ is the number of tiles in the $j$th column.
	Because $c$ is a constant, we can always find the next column that is to be shifted.
	Afterwards, we copy $C_i$ exactly $c-1$ times to the right of $C_i$, which costs $\OCal(N_{C_i}c)$ unit steps.
	Thus, extending a single row costs $\OCal(c\cdot(wN+N_{C_i}))$ and hence extending all rows costs $\OCal(cw^2N)$ unit steps in total.
	
	Extending each row is done analogously. 
	However, because each tile has already been copied $c$ times, we obtain a running time of $\OCal(c^2h^2 N)$.
	This implies a total running time of $\OCal((w^2+h^2)c^2N)$.
	The proof for the tile and space complexity is straightforward.
\end{proof}

\section{Conclusion}

We have given a number of tools and functions for manipulating an arrangement of tiles by a single robotic automaton.
There are various further questions, including more complex operations, as well as sharing the work between multiple
robots. These are left for future work.

\bibliography{bib}
\bibliographystyle{abbrv}
\newpage

\end{document}